\documentclass[conference,10pt]{IEEEtran}

\usepackage{cite}

% *** GRAPHICS RELATED PACKAGES ***
%
\ifCLASSINFOpdf
  \usepackage[pdftex]{graphicx}
   \graphicspath{{.}{./figs/}}
   \DeclareGraphicsExtensions{.pdf,.jpeg,.png,.pdf}
\else

\fi
% *** MATH PACKAGES ***
\usepackage[cmex10]{amsmath}
\usepackage{algorithmic}

% *** ALIGNMENT PACKAGES ***
%
\usepackage{array}

\usepackage{mdwmath}
\usepackage{mdwtab}

\usepackage[caption=false,font=footnotesize]{subfig}
\usepackage{fixltx2e}

\usepackage{url}

% correct bad hyphenation here
\hyphenation{op-tical net-works semi-conduc-tor}

\newcommand{\bl}[1]{{\color{black}{#1}}}

\newcommand{\B}{\tilde{\A}}
\newcommand{\one}{\boldsymbol{1}}

\newcommand{\m}{{N}}

\newcommand{\pow}[1]{\Gamma_{#1}^\ell}

\newcommand{\nc}{\left[\n\right]}

\usepackage{algorithm,amsmath,amssymb}
\usepackage{algorithmic}

\newcommand{\Ex}{\mathbb{E}}

%matrix

\newcommand{\mm}{{N}}

\renewcommand{\m}{{\mm}}

\newcommand{\nor}{\frac{1}{\sqrt{\mm}}}
\newcommand{\A}{{\Phi}}

\newtheorem{proposition}{Proposition}

\newtheorem{theorem}{Theorem}
\newtheorem{lemma}{Lemma}
\newtheorem{remark}{Remark}

%\newtheorem*{notation}{\mmotation}

%signal

\newcommand{\has}{{{\hat{\as}}}}

\newcommand{\n}{{\cal C}}

\newcommand{\as}{{\alpha}}
\newcommand{\yy}{{f}}
%ingrid
\newcommand{\icd}{}

\newcommand{\ignore}[1]{}           

\usepackage{color}
\usepackage{algorithmic}
\renewcommand{\bl}[1]{{{\color{black} {#1}}}}
\newcommand{\alert}[1]{{{\color{black} {#1}}}}
\begin{document}
%
% paper title
% can use linebreaks \\ within to get better formatting as desired
\title{Sparse Reconstruction via The Reed-Muller Sieve}

% author names and affiliations
% use a multiple column layout for up to three different
% affiliations
\author{\IEEEauthorblockN{Robert Calderbank}
\IEEEauthorblockA{Department of Electrical Engineering\\ 
Princeton University\\
Princeton, NJ 08540\\
Email: calderbk@math.princeton.edu}
\and
\IEEEauthorblockN{Stephen Howard}
\IEEEauthorblockA{DSTO\\
PO Box 1500\\
Edinburgh 5111, Australia\\
Email: show@ee.unimelb.edu.au}
\and
\IEEEauthorblockN{Sina Jafarpour}
\IEEEauthorblockA{Department of Computer Science\\
Princeton University\\ Princeton NJ 08540\\
Email: sina@cs.princeton.edu}}

% conference papers do not typically use \thanks and this command
% is locked out in conference mode. If really needed, such as for
% the acknowledgment of grants, issue a \IEEEoverridecommandlockouts
% after \documentclass
\maketitle

\begin{abstract}
\boldmath
\bl{This paper introduces the Reed Muller Sieve, a deterministic measurement matrix for compressed sensing. The columns of this matrix are obtained by exponentiating codewords in the quaternary second order Reed Muller code of length $N$. For $k=O(N)$, the Reed Muller Sieve improves upon prior methods for identifying the support of a $k$-sparse vector by removing the requirement that the signal entries be independent. The Sieve also enables local detection; an algorithm is presented with complexity $N^2 \log N$ that detects the presence or absence of a signal at any given position in the data domain without explicitly reconstructing the entire signal. Reconstruction is shown to be resilient to noise in both the measurement and data domains; the $\ell_2 / \ell_2$ error bounds derived in this paper are tighter than the $\ell_2 / \ell_1$ bounds arising from random ensembles and the $\ell_1 /\ell_1$ bounds arising from expander-based ensembles.}
\end{abstract}
\begin{keywords}\bl{Deterministic Compressed Sensing, Model Identification, Local Reconstruction, Second Order Reed Muller Codes.}
\end{keywords}
\IEEEpeerreviewmaketitle

\section{Introduction}
The central goal of compressed sensing is to capture attributes of a signal using very few measurements. In most work to date, this broader objective is exemplified by the important special case in which the measurement data constitute a vector $f=\A\as+e$, where $\A$ is an $\mm\times\n$ matrix called the \textit{sensing matrix}, $\as$ is a vector in $\mathbb{C}^\n$, which can be well-approximated by a $k$-sparse vector, where a $k$-sparse vector is a vector which has at most $k$ non-zero entries, and $e$ is additive measurement noise. 

The role of random measurement in compressive sensing (see \cite{CRT1} and \cite{Donoho}) can be viewed as analogous to the role of random coding in Shannon theory. Both provide worst-case performance guarantees in the context of an adversarial signal/error model. In the standard paradigm, the measurement matrix is required to act as a near isometry on all $k$-sparse signals (this is the Restricted Isometry Property or RIP introduced in \cite{CT}). Basis Pursuit \cite{CRT1,CRT2} or Matching Pursuit algorithms \cite{greed,NT} can then be used to recover any $k$-sparse signal from the $N$ measurements. These algorithms rely heavily on matrix-vector multiplication and their complexity is super-linear with respect to $\n$, the dimension of the data domain. The worst case complexity of the convex programs Basis Pursuit \cite{CRT1}, LASSO \cite{CP07} and the Dantzig Selector \cite{dantzig} is $\n^3$ though the average case complexity is less forbidding. Although it is known that certain probabilistic processes generate $\mm\times\n$ measurement matrices that satisfy the RIP with high probability, there is no practical algorithm for verifying whether a given measurement matrix has this property. Storing the entries of a random sensing matrix may also require significant resources.

The Reed Muller Sieve is a deterministic sensing matrix. The columns are obtained by exponentiating codewords in the quaternary second order Reed Muller code; they are uniformly and very precisely distributed over the surface of an $\mm$-dimensional sphere. Coherence between columns reduces to properties of these algebraic codes and we use these properties to show that recovery of $k$-sparse signals is possible with high probability.   

When the sparsity level $k=O\left(\sqrt{N}\right)$, recovery is possible using the algorithm presented in \cite{strip} and the reconstruction complexity is only  $k\,\mm \log^2\mm$. The prospect of designing matrices for which very fast recovery algorithms are possible is one of the attractions to deterministic compressive sensing. When the sparsity level $k=O(N)$ recovery is possible using the algorithm described in this paper. Reconstruction complexity is $\mm\,\n$, the same as for both CoSaMP \cite{NT} and SSMP \cite{DM}.

We note that there are many important applications where the objective is to identify the signal model (the support of the signal $\as$). These include network anomaly detection where the objective is to characterize anomalous flows and cognitive radio where the objective is to characterize spectral occupancy. The Reed Muller sieve improves on results obtained by Cand\`es and Plan \cite{CP07} in that for $k=O(\mm)$ it is able to identify the signal model without requiring that the signal entries $\as_i$ be independent.

Reconstruction of a signal from sensor data is often not the ultimate goal and it is of considerable interest in imaging to be able to deduce attributes of the signal from the measurements without explicitly reconstructing the full signal. We show that the Reed Muller Sieve is able to detect the presence or absence of a signal at any given position in the data domain without needing to first reconstruct the entire signal. The complexity of such detection is $\mm^2 \log\mm$. This makes it possible to quickly calculate thumbnail images and to zoom in on areas of interest.

There are two models for evaluating noise resilience in compressive sensing. We provide an average case error analysis for both the stochastic model where noise in the data and measurement domains is usually taken to be iid white Gaussian, and the deterministic model where the goal is to approximate a compressible signal. It is the geometry of the sieve, more precisely the careful design of coherence between columns of the measurement matrix, which provides resilience to noise in both the measurement and the data domain. Our analysis points to the importance of both the average and the worst-case coherence. 
 
We show that the $\ell_2$ error in reconstruction is bounded above by the $\ell_2$ error of the best $k$-term approximation. This type of $\ell_2 / \ell_2$ bound is tighter than the $\ell_2 /\ell_1$ bounds arising from random ensembles \cite{CRT1,NT} and the $\ell_1/\ell_1$ bounds arising from expander-based ensembles \cite{IR,sina}. We emphasize that our error bound is for average-case analysis and note that results obtained by Cohen et. al. \cite{best} show that worst-case $\ell_2/\ell_2$ approximation is not achievable unless $\mm=O(\n)$.

\section{Two Fundamental Measures of Coherence}
\label{sec:pre}
{

Throughout this paper we also abbreviate $\{1,\cdots,\n\}$ by $\nc$. We shall use the notation $\varphi_j$ for the $j^{th}$ column of the sensing matrix; its entries will be denoted by $\varphi_j(x)$, with the row label $x$ varying from $0$ to $\mm-1$. We consider sensing matrices for which reconstruction of $\as$ is guaranteed in expectation only, and so we need to be precise about our signal model. 

A signal $\as\in\mathbb{R}^\n$ is $k$-sparse if it has at most $k$ non-zero entries. The support of the vector $\as$, denoted by $\mbox{Supp}(\as)$, contains the indices of the non-zero entries of $\as$. Let $\pi\nobreak=\nobreak\{\pi_1,\cdots,\pi_\n\}$ be a uniformly random permutation of $\nc$. Since our focus is on the average case analysis, we always assume that $\as$ is a $k$-sparse signal with $\mbox{Supp}(\as)=\{\pi_1,\cdots,\pi_k\}$ and the values of the $k$ non-zero entries of $\as$ are specified by $k$ \textit{fixed} numbers $\as_1,\cdots,\as_k$. We shall also define $|\as_{\min}|\doteq \min_{{i\,:\as_i\neq 0}} |\as_i|.$ 
%The following proposition by Tropp \cite{joel1} (See also \cite{GH}) states that with high probability a random sub-matrix of a well behaved conditioned sensing matrix has full rank.
%\begin{proposition}[\cite{joel1}. Theorem D]\label{GH}
% Let $S=\{\pi_1,\cdots,\pi_k\}$. Suppose the $\mm \times \n$ sensing matrix  $\A$ is a disjoint union of orthonormal bases and let $\mu\doteq\max_{i\neq j} \left|\varphi_i^\dag \varphi_j\right|$. Then for every constant $c$ greater than one, as long as $k \log \left(\n^c\right)\l \mu^{-2}$, 
%$$\Pr_\pi\left[\|\A_S^\dag \A_S-I\|_2\geq 0.5 \right] \leq \frac{1}{\n^c}.$$
%\end{proposition}

The following proposition is proved by Calderbank et. al \cite{strip,tech1} and plays a key role in our analysis
\begin{proposition}\label{foundation} Let $\alpha$ be a $k$-sparse vector with support $S=\{\pi_1,\cdots,\pi_k\}$. Let $h$ be a function from $\nc\times\nc$ to $ \mathbb{R}$, and let $\A$ be an $\mm \times \n$ sensing matrix. If the following two conditions hold:
\begin{itemize}
\item(St1). $\mu\doteq\max_{i\neq j}\left|h(i,j) \right|\leq \mm^{-\eta}~~\mbox{for}\,\,0\leq\eta\leq 0.5.$
\item(St2).  $\nu\doteq \max_i\,\frac{1}{\n-1}\left| \sum_{j\neq i}h(i,j) \right|\leq \mm^{-\gamma}~~\mbox{for}\,\,\gamma\geq1.$
\end{itemize}
Then for all positive $\epsilon$ and for all $k$ less than $\min\left\{{\frac{\n}{2}},\frac{\epsilon \mm^{\gamma}}{4}\right\}$, with probability $1-k\,\n\, \exp\left\{-\frac{\mm^{2\eta}\epsilon^2}{32} \right\}$ the following three statements hold:
\begin{itemize}
\item(Sp1) For every $w$ in $\nc-S$:\, $\left|\sum_j \alpha_j h(w,\pi_j)\right|\leq \epsilon \|\as\|_2$.
\item(Sp2) For every index $i$ in $\{1,\cdots,k\}$: $$\left|\sum_{j\neq i} \alpha_i h(\pi_i,\pi_j)\right| \leq \epsilon \|\as\|_2.$$
\item (Sp3) If $\n{|\as_{\min}|^2}\geq{\|\as\|^2}$, then 
$$\left|\sum_i \sum_{j\neq i} \alpha_i \overline{\alpha_j} h(\pi_i,\pi_j)\right| \leq \epsilon \|\as\|^2.$$
\end{itemize}
\end{proposition}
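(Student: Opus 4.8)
The plan is to treat all three claims as concentration inequalities for the single source of randomness, the permutation $\pi$, writing each target quantity as (conditional mean) $+$ (fluctuation about that mean) and controlling the mean through (St2) and the fluctuation through (St1). \textbf{That the mean is negligible} is the easy half: for a fixed row label $w$, $\mathbb{E}_{\pi}\,h(w,\pi_j)$ is the off-diagonal row average $\tfrac{1}{\mathcal{C}-1}\sum_{\ell\neq w}h(w,\ell)$, of modulus at most $\nu\le N^{-\gamma}$ by (St2), while $\mathbb{E}_{\pi}\,h(\pi_i,\pi_j)$, an average over two distinct random labels, is the global off-diagonal average of $h$, again of modulus $\le\nu$. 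Pairing $k$ such terms against $\alpha$ and using $\|\alpha\|_1\le\sqrt{k}\,\|\alpha\|_2$, the means appearing in (Sp1)--(Sp2) are at most $\sqrt{k}\,\nu\,\|\alpha\|_2$ and the mean in (Sp3) at most $k\,\nu\,\|\alpha\|_2^2$; since $k<\epsilon N^{\gamma}/4$ forces $k\nu<\epsilon/4$, these are all $<\tfrac{\epsilon}{4}\|\alpha\|_2$ (resp.\ $<\tfrac{\epsilon}{4}\|\alpha\|_2^2$). It then remains, after conditioning on $w\notin S$ in (Sp1) so that the diagonal of $h$ never enters (an event of probability $\ge\tfrac12$ since $k<\mathcal{C}/2$), to bound the deviation of each quantity from its mean by $\tfrac{3}{4}\epsilon\|\alpha\|_2$ (resp.\ $\tfrac{3}{4}\epsilon\|\alpha\|_2^2$).

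\textbf{For (Sp1) and (Sp2)} I would invoke a bounded-differences inequality valid for sampling without replacement (equivalently Azuma--Hoeffding on the Doob martingale that reveals $\pi_1,\pi_2,\dots$ in turn). In (Sp1), with $w$ fixed, $g_w(\pi)=\sum_j\alpha_j h(w,\pi_j)$ changes by at most $2|\alpha_j|\mu$ when the $j$-th value of $\pi$ is resampled --- by (St1), and only the $j$-th summand is affected --- so the squared differences sum to $4\mu^2\|\alpha\|_2^2$ and a deviation of $\tfrac{3}{4}\epsilon\|\alpha\|_2$ occurs with probability at most $\exp(-N^{2\eta}\epsilon^2/32)$; a union bound over the $\le\mathcal{C}$ choices of $w$ gives (Sp1). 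The same computation applies in (Sp2) \emph{provided one first conditions on the value $\pi_i=a$}: without this, the coordinate $\pi_i$ is influential --- resampling it perturbs all $k-1$ summands and multiplies the squared-difference sum by $\approx k$, ruining the exponent --- whereas after conditioning, $\sum_{j\neq i}\alpha_j h(a,\pi_j)$ is once more a weighted sum of $k-1$ off-diagonal entries of one row, with squared-difference sum $\le 4\mu^2\|\alpha\|_2^2$. The union bound now ranges over the $\le k$ indices $i$ \emph{and} the $\le\mathcal{C}$ values $a$, which is precisely the origin of the factor $k\mathcal{C}$ in the stated probability.

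\textbf{The main obstacle is (Sp3)}: the quantity $\sum_{i\neq j}\alpha_i\overline{\alpha_j}h(\pi_i,\pi_j)$ is a degree-two function of $\pi$ (a random quadratic form, or $U$-statistic), in which a single value $\pi_\ell$ enters $\approx k$ of the summands, so a naive bounded difference scales like $|\alpha_\ell|\,\|\alpha\|_1\,\mu$ --- far too large. The plan is to reduce to the linear estimates: reveal $\pi_1,\pi_2,\dots$ successively, and observe that the $\ell$-th increment of the Doob martingale of this sum is, up to recentring and a lower-order correction, the ``row sum'' $\alpha_\ell\sum_{i<\ell}\overline{\alpha_i}\,h(\pi_i,\pi_\ell)$ (plus its conjugate companion), a quantity controlled just as in (Sp1)--(Sp2) and so of size $O(|\alpha_\ell|\,\epsilon'\|\alpha\|_2)$; then $\sum_\ell$ of the squared increments is $O(\epsilon'^2\|\alpha\|_2^4)$ and one further application of Azuma's inequality closes the bound. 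The delicate part --- and where I expect to spend most of the effort --- is to make this row-sum estimate hold uniformly over the prefix $\{\pi_1,\dots,\pi_{\ell-1}\}$ and over $\ell$ without degrading the constant $1/32$ or the prefactor $k\mathcal{C}$; it is here that the extra hypothesis $\mathcal{C}|\alpha_{\min}|^2\ge\|\alpha\|_2^2$ is used, to forbid $\alpha$ from being too spiky so that the energy $\|\alpha\|_2^2$ is spread widely enough over the support that no single coordinate dominates the increments and the squared-increment sum genuinely scales like $\|\alpha\|_2^4$ rather than $\|\alpha\|_\infty^2\|\alpha\|_2^2$.
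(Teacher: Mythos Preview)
The paper does not actually prove Proposition~1: it is imported from the StRIP references \cite{strip,tech1} and used here as a black box, so there is no in-paper argument to compare against. Your outline is nonetheless the standard route for statements of this kind and matches what those references do: decompose each quantity into its (conditional) mean, dispatched by the row-average bound (St2) together with the hypothesis $k<\epsilon N^{\gamma}/4$, plus a fluctuation, controlled by Azuma--Hoeffding on the Doob martingale that reveals $\pi_1,\pi_2,\dots$ for sampling without replacement, with increments bounded via (St1). Your device for (Sp2) --- condition on $\pi_i=a$ so that the remaining sum is again a single-row object with Lipschitz constants $O(|\alpha_j|\mu)$, then union-bound over the $k$ indices $i$ and the $\le\mathcal C$ values of $a$ --- is exactly what produces the prefactor $k\mathcal C$ in the stated probability, and your plan for (Sp3) (treat the quadratic form as a martingale whose $\ell$-th increment is essentially a row sum already controlled by the linear estimate, then apply Azuma once more) is the right reduction.

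Two small points to watch. First, as printed, the coefficient inside the sum in (Sp2) is $\alpha_i$, not the $\alpha_j$ your sketch works with; this is almost certainly a typo in the paper (the $\alpha_j$ version is what is needed downstream and what the cited references state), but you should reconcile your write-up with whichever version you adopt. Second, in your (Sp3) argument the uniformity over prefixes that you flag as ``the delicate part'' is usually handled not by a separate uniform bound but simply by noting that, on the high-probability event where all the (Sp2)-type row sums are $\le\epsilon'\|\alpha\|_2$, each martingale increment is deterministically bounded by $2|\alpha_\ell|\epsilon'\|\alpha\|_2$, whence $\sum_\ell c_\ell^2\le 4\epsilon'^2\|\alpha\|_2^4$; the anti-spikiness hypothesis $\mathcal C|\alpha_{\min}|^2\ge\|\alpha\|_2^2$ enters only to allow the choice of $\epsilon'$ that makes the final constant come out to $1/32$.
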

\begin{remark}
A matrix satisfying conditions (St1) and (St2) is called a StRIP-able matrix. Similarly a matrix satisfying conditions (Sp1), (Sp2), and (Sp3) is called a StRIP matrix. Proposition~\ref{foundation} states that StRIP-ability is a sufficient condition for the StRIP property.
\end{remark}

\section{The Reed-Muller Sieve}
\label{sec:matrices}

Let m be an odd integer. The measurement matrix $\B\nobreak=\nobreak\B_{m,r}$ has $2^m$ rows indexed by binary $m$-tuples $x$ and $2^{(r+1)m}$ columns indexed by $m\times m$ binary symmetric matrices $Q$ in the Delsarte-Goethals set $DG(m,r)$. The entry $\varphi_{Q}(x)$ is given by $
%\begin{equation}
%\label{kerdock}
\varphi_{Q}(x)=\imath^{xQx^\top},
%\end{equation}
$ and all arithmetic in the exp{\icd{ressions}} $xQx^\top$ takes place in the ring of integers modulo $4$. The matrices in $DG(m,r)$ form an $(r+1)$-dimensional binary vector space and the rank of any non-zero matrix is at least $m-2r$ (see \cite{DG} and also \cite{tech1} for an alternative description). The Delsarte-Goethals sets are nested
$$DG(m,0) \subset DG(m,1)\subset\cdots \subset DG\left(m,\frac{m-1}{2}\right).$$
The set DG(m,0) is called the Kerdock set and it contains $2^m$ nonsingular matrices with distinct main diagonals. The vector of length $2^m$ with entries $xQx^\top$ is a codeword in the quaternary Delsarte-Goethals code \cite{H}.

In Section~\ref{sec:witness} we will apply the following result on partial column sums to guarantee  fidelity of reconstruction.
\alert{
\begin{proposition}\label{new}
Let $V$ and $W$ be two binary symmetric matrices and let ${\cal N}_W$ and ${\cal N}_{V-W}$ be the null spaces of $W$ and $V-W$. If ${\rm S}=\sum_{x,a }\imath^{aVa^\top+xWx^\top+2aWx^\top},$
then $\left|{\rm S}^2\right|=2^{2m}\,\, 2^{|{\cal N}_W|+|{\cal N}_{V-W} |}$.
\end{proposition}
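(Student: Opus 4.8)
The plan is to evaluate $|{\rm S}^2| = |{\rm S}|^2 = {\rm S}\,\overline{{\rm S}}$ directly. Write the exponent as $\phi(a,x) = aVa^\top + xWx^\top + 2aWx^\top$, so that ${\rm S}\,\overline{{\rm S}} = \sum_{a,x,a',x'}\imath^{\,\phi(a,x)-\phi(a',x')}$ with all variables ranging over binary $m$-tuples (row vectors). The single fact I would record first is the quadratic identity, valid for any binary symmetric $M$ and binary $y,z$: $(y\oplus z)M(y\oplus z)^\top \equiv yMy^\top + zMz^\top + 2yMz^\top \pmod 4$, where $y\oplus z$ is the coordinatewise sum modulo $2$. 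It follows by expanding, using $y_i^2\equiv y_i$ and noting that any term carrying a factor of $2$ may be computed with its arguments reduced modulo $2$.

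I would then substitute $a = a'\oplus c$ and $x = x'\oplus w$. Applying the identity to $aVa^\top$ and $xWx^\top$, and expanding $2aWx^\top$ bilinearly (which is allowed because of the overall factor $2$), one finds that $\phi(a,x)-\phi(a',x')$ separates into a part $2a'(Vc^\top+Ww^\top)$ linear in $a'$, a part $2x'W(w\oplus c)^\top$ linear in $x'$, and a residual $\psi(c,w) = Q_V(c) + Q_W(w) + 2cWw^\top$, where I abbreviate $Q_M(y)=yMy^\top\bmod 4$. Summing over $a'$ and over $x'$ each contributes a factor $2^m$ and a Kronecker delta forcing $Vc^\top + Ww^\top$ and $W(w\oplus c)^\top$ to vanish modulo $2$; since the second condition gives $Ww^\top\equiv Wc^\top$, the two together are equivalent to $(V-W)c^\top\equiv 0$ and $w\oplus c\in{\cal N}_W$, that is, to $c\in{\cal N}_{V-W}$ and $w\in c\oplus{\cal N}_W$. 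Hence ${\rm S}\,\overline{{\rm S}} = 2^{2m}\sum_{c\in{\cal N}_{V-W}}\sum_{u\in{\cal N}_W}\imath^{\,\psi(c,\,c\oplus u)}$, a sum over exactly $2^{|{\cal N}_W|+|{\cal N}_{V-W}|}$ pairs. A last application of the quadratic identity (the two copies of $2cWu^\top$ that arise combine to $4cWu^\top\equiv 0$), together with $3\equiv -1\pmod 4$, collapses $\psi(c,c\oplus u)$ to $Q_{V-W}(c)+Q_W(u)\pmod 4$, so the double sum factors as $\left(\sum_{c\in{\cal N}_{V-W}}\imath^{\,Q_{V-W}(c)}\right)\left(\sum_{u\in{\cal N}_W}\imath^{\,Q_W(u)}\right)$.

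What remains, and what I expect to be the main obstacle, is to show that each of these two factors equals the number of terms it contains, that is, $\sum_{u\in{\cal N}_W}\imath^{\,Q_W(u)}=2^{|{\cal N}_W|}$ and $\sum_{c\in{\cal N}_{V-W}}\imath^{\,Q_{V-W}(c)}=2^{|{\cal N}_{V-W}|}$. On the radical ${\cal N}_W$ the symmetric bilinear form $2yWz^\top$ vanishes, so the quadratic identity makes $u\mapsto Q_W(u)$ a homomorphism from ${\cal N}_W$ into $\{0,2\}$, and likewise for $Q_{V-W}$ on ${\cal N}_{V-W}$; the sums above are $2^{|{\cal N}_W|}$ and $2^{|{\cal N}_{V-W}|}$ exactly when these homomorphisms are trivial, that is, when the $\mathbb{Z}_4$-valued quadratic forms attached to $W$ and to $V-W$ vanish on their radicals. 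This is the step that uses the algebraic structure of the Delsarte--Goethals set rather than mere bookkeeping: it can be read off from the descriptions of $DG(m,r)$ in \cite{DG,tech1}, but, unlike the rank bound $m-2r$, it genuinely fails for an arbitrary binary symmetric matrix, so it is here that the content of the proposition lies. Granting it, the two factors collapse and $|{\rm S}^2| = {\rm S}\,\overline{{\rm S}} = 2^{2m}\,2^{|{\cal N}_W|+|{\cal N}_{V-W}|}$.
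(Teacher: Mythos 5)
Your reduction is correct, and it is in essence the paper's own argument in a cleaner form: the paper expands ${\rm S}^2$ (so the diagonal vectors $d_V,d_W$ appear and the character-orthogonality step produces \emph{affine} constraints), whereas you expand ${\rm S}\overline{{\rm S}}$ and get the homogeneous constraints $c\in{\cal N}_{V-W}$, $w\oplus c\in{\cal N}_W$; both routes arrive at $2^{2m}$ times a product of two quadratic character sums over the radicals ${\cal N}_W$ and ${\cal N}_{V-W}$. The genuine issue is the step you explicitly left open: that each of those sums equals the cardinality of its radical. You are right that this is not bookkeeping — it holds precisely when the $\mathbb{Z}_4$-valued forms vanish on their radicals — but be aware that the paper's proof does not supply this step either; it simply asserts the final equality. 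Moreover, the appeal to Delsarte--Goethals structure cannot close it in the generality claimed. Take $m=3$ and $V=W$ equal to the symmetric matrix with zero diagonal and all off-diagonal entries $1$. Then ${\cal N}_W=\{0,(1,1,1)\}$ and $(1,1,1)W(1,1,1)^\top=6\equiv 2\pmod 4$, so $\sum_{u\in{\cal N}_W}\imath^{\,uWu^\top}=0$; directly, ${\rm S}=2^m\sum_y \imath^{\,yWy^\top}=2^3(1+3-3-1)=0$, while the proposition predicts $\left|{\rm S}^2\right|=2^{2m}2^{1+3}\neq 0$. Since $DG(3,1)$ is the full space of binary symmetric $3\times 3$ matrices, this $W$ lies in the Delsarte--Goethals family, so the statement fails even there, not only for ``arbitrary'' symmetric matrices.

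So the gap in your write-up is real but unfillable as an equality: the character-sum analysis you set up shows only that each factor has magnitude either $2^{|{\cal N}_W|}$ (resp.\ $2^{|{\cal N}_{V-W}|}$) or $0$, hence what is actually provable is the bound $\left|{\rm S}^2\right|\leq 2^{2m}\,2^{|{\cal N}_W|+|{\cal N}_{V-W}|}$, with equality exactly when both forms vanish on their radicals. That inequality is all the paper uses downstream (in Lemma~\ref{lem:good} Proposition~\ref{new} is invoked only to bound the relevant expectation by $2^{2r}$, via the rank bound $m-2r$ on nonzero differences of $DG(m,r)$ matrices), so the right fix is to prove and cite the one-sided version rather than to hunt for a vanishing-on-the-radical property that does not hold in general. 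If you restate the proposition as an upper bound, your argument, with the homomorphism observation you already made, is complete and slightly tidier than the published one.
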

\begin{IEEEproof} We have
$${\rm S}^2=\sum_{\substack{a,b,x,y}} \imath^{aVa^\top+bVb^\top+xWx^\top+yWy^\top+2aWx^\top+2bWy^\top} .$$
Changing variables to $z=x+y,\,c=a+b,\,y$ and $b$ yields
\begin{eqnarray}\label{vanish} {\rm  S}^2&=&\sum_{c,z} \imath^{cVc^\top+zWz^\top+2cWz^\top} \\ \nonumber &&\left(\sum_b (-1)^{\left(cV+d_V+zW\right)b^\top} \right) \left(\sum_y (-1)^{\left(zW+d_W+cW\right)y^\top} \right)  .\end{eqnarray}
The terms in Equation~\eqref{vanish} vanishes unless $cV+d_V+zW=0$ and $zW+d_W+cW=0$ simultaneously. Hence, we can rewrite Equation~\eqref{vanish} as
$$2^{2m} \sum_{\substack{c,z\\(c+z)W=d_W\\c(V+W)=d_V+d_W}} \imath^{(c+z)W(c+z)^\top+c(V-W)c^\top}.$$
Write $c=c_1+e$ with $c_1(V+W)=d_V+d_W$ and $e(V+W)\nobreak=\nobreak0$, and $c+z=(c_2+z_2)+f$ with $(c_2+z_2)W=d_W$ and $f W=0$. Then
$$\left|{\rm S} \right|^2=2^m \left|\sum_f \imath^{fWf^\top} \right| \left|\sum_e \imath^{e(V-W)e^\top} \right|=2^{2m}\,2^{|{\cal N}_W|+|{\cal N}_{V-W}|}.$$
\end{IEEEproof}
}

Proposition~\ref{new} bounds the worst case coherence between columns of $\B_{m,r}$. We bound average coherence by dividing the columns into a set H indexed by the matrices in $DG(m,r)$ with zero diagonal, and a set D indexed by the matrices in the Kerdock set. The columns in H form a group under pointwise multiplication.

\begin{lemma}\label{average} Let $\B$ be a $DG(m,r)$ sensing matrix. Then
$$\sum_{j\neq i} \varphi_i^\dag\varphi_j=\frac{-\mm}{\n-1}\mbox{ for every index $i$}.$$\end{lemma}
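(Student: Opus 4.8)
My plan is to evaluate the entire row sum of the Gram matrix and then peel off the diagonal. Let $S=\sum_{j}\varphi_j$ denote the sum of all columns of $\B$. Then $\sum_{j\neq i}\varphi_i^\dag\varphi_j=\varphi_i^\dag S-\varphi_i^\dag\varphi_i$, so it is enough to compute $S$ together with the inner products $\varphi_i^\dag S$ (the term $\varphi_i^\dag\varphi_i$ being a fixed normalization constant). I would compute $S$ row by row, $S(x)=\sum_{Q}\imath^{xQx^\top}$, using the decomposition of the columns into the set $H$ of matrices in $DG(m,r)$ with zero diagonal and the set $D$ indexed by the Kerdock matrices, and exploiting that $\{\varphi_Q:Q\in H\}$ is a group under pointwise multiplication.

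For the $H$-part this is clean. Every $\varphi_Q$ with $Q\in H$ is real-valued (the form $xQx^\top$ is even), and $\varphi_Q\varphi_{Q'}=\varphi_{Q\oplus Q'}$ since $x(Q\oplus Q')x^\top\equiv xQx^\top+xQ'x^\top\pmod{4}$. Hence, for $Q_i,Q_j\in H$, $\varphi_i^\dag\varphi_j=\langle\one,\varphi_{Q_i\oplus Q_j}\rangle$, and reindexing gives $\sum_{j\in H}\varphi_i^\dag\varphi_j=\sum_{x}\sum_{R\in H}\varphi_R(x)$ --- the same value for every $i\in H$. Because $R\mapsto(-1)^{\frac12 xRx^\top}$ is an additive character of the group $H$, the inner sum equals $|H|$ when $x$ lies in $T\doteq\{x:\ xRx^\top\equiv 0\pmod{4}\ \text{for all }R\in H\}$ and vanishes otherwise; so on the $H$-part $S$ is the vector $|H|\cdot\mathbf{1}_T$, and the $H$-contribution to the row sum is $|H|\,|T|$. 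Determining $|T|$ is a statement about the dual of the Delsarte-Goethals code.

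For the $D$-part I would use that the Kerdock matrices have pairwise distinct main diagonals: splitting $xQx^\top$ into its (linear) diagonal term and its even off-diagonal term, the sum of $\imath^{xQx^\top}$ over the distinct diagonals kills most rows $x$, and the surviving contributions are Gauss sums whose magnitudes are pinned down by $\dim\mathcal N_Q$ through the rank bound $\mathrm{rank}(Q)\ge m-2r$; these are exactly the double sums evaluated by Proposition~\ref{new}. The same group action --- now $H$ translating off-diagonal parts within a fixed diagonal class --- together with the evident symmetries of $\B$ shows that $\varphi_i^\dag S$ is independent of the column index $i$, covering both $i\in H$ and $i\in D$. Combining the $H$- and $D$-contributions and subtracting $\varphi_i^\dag\varphi_i$ then yields the constant $-\m/(\n-1)$.

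The step I expect to be the real obstacle is that the lemma demands an \emph{exact} closed form rather than an estimate, so the signed Gauss sums arising from matrices of the several admissible ranks in $DG(m,r)$ must collapse to precisely $-\m/(\n-1)$. This rules out getting by with the rank inequality alone; one must use the full rank/weight distribution of $DG(m,r)$, the fact that the $2^m$ Kerdock diagonals are distinct and exhaust all possibilities, and the group law on $H$. Carrying out this cancellation cleanly and identifying $|T|$ is where the work is concentrated.
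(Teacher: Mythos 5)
Your submission is an outline rather than a proof: it reduces the lemma to ``compute $\varphi_i^\dag S-\m$ where $S=\sum_j\varphi_j$,'' carries out only the easy structural part of that computation, and then defers exactly the step that constitutes the lemma. Concretely, you correctly observe that the $H$-character sum is $|H|\cdot\mathbf{1}_T$, where $T$ is the common zero locus of the zero-diagonal forms; note that $T$ always contains $x=0$ and every weight-one $x$, so this block is far from negligible. What remains is to evaluate $\sum_{x\in T}\overline{\varphi_i(x)}\sum_{K}\imath^{xKx^\top}$, show it is the same constant for every column $i$, and show the total collapses to the claimed value. You explicitly leave this to ``the full rank/weight distribution of $DG(m,r)$'' without exhibiting any mechanism for the cancellation, and the surviving rows are genuinely problematic: since the $2^m$ Kerdock diagonals are distinct they exhaust $\mathbb{F}_2^m$, so for $x=e_j\in T$ the inner sum is $\sum_K\imath^{K_{jj}}=2^{m-1}(1+\imath)$, which is neither real nor independent of the phase $\overline{d(e_j)}$ of the fixed column; your appeal to ``evident symmetries'' only shows $\varphi_i^\dag S$ is unchanged when the $H$-part of $\varphi_i$ varies (all $h\in H$ equal $1$ on $T$), not when its Kerdock part varies. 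So the route through the global column sum leaves uncancelled, column-dependent terms, and nothing in the proposal disposes of them.

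The paper's own proof takes a different and much shorter route: it never forms $S$ and needs neither $T$, nor Proposition~\ref{new}, nor any rank distribution. It fixes $\varphi_i=hd$, splits the sum over $j$ according to whether the Kerdock part $d'$ of $\varphi_j=h'd'$ equals $d$, and argues that for $d'\neq d$ the index $h'$ runs over all of $H$ with $\sum_{h'}h^{-1}h'=0$, while for $d'=d$ one has $\sum_{h'\neq h}h^{-1}h'=-\one$ and $\sum_x d^{-1}(x)\,\one(x)\,d(x)=\m$, giving $-\m$ in total and hence the stated average. Observe that the quantity the paper declares to vanish, $\sum_{h'\in H}h^{-1}h'$, is precisely what you compute to be $|H|\,\mathbf{1}_T\neq 0$; your more careful bookkeeping is thus in direct tension with the one-line step the paper relies on, and a complete argument must resolve that tension (e.g.\ by accounting explicitly for the rows in $T$) rather than defer it. As it stands, your write-up neither reproduces the paper's short group argument nor completes the exact-evaluation route it proposes, so the statement is not established.
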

\begin{proof}
Any column can be written as a pointwise product $hd$ with $h$ in H and $d$ in D. Average coherence with respect to $hd$ is then
\begin{eqnarray} (\n-1)^{-1} \sum\limits_{(h',d')\neq(h,d)}  d^{-1} h^{-1}h' d'\label{goodgroup}.
\end{eqnarray}
If $d\neq d'$, then $h'$ ranges over all elements of H and $\sum_{h'} h^{-1}h'=0$. Otherwise $h' \neq h$ and $\sum_{h' \neq h} h^{-1}h'=-\one$. In this case $d^{-1}\one d =\mm$, which completes the proof.
\end{proof}
The normalized Delsarte-Goethals sensing matrix is given by $\A=\nor\B$, and we have now proved 
\begin{theorem}\label{proposed}
 The normalized matrix $\A$ satisfies Condition (St1) with $\eta = 1+r$ and Condition (St2) with $\gamma = \frac{1}{2} \left(1-\frac{2r}{m}\right).$
\end{theorem}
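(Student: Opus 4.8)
The plan is to verify the two conditions of Proposition~\ref{foundation} for $\A=\nor\B$, where $h(i,j)=\varphi_i^\dagger\varphi_j$ is the inner product of its $i$th and $j$th unit-norm columns, handling the worst-case quantity in (St1) and the averaged quantity in (St2) with the two structural results already proved. Because $\A=\nor\B$ rescales each column of $\B$ (which has squared norm $\mm=2^m$) to unit length, every inner product of $\A$ is the corresponding inner product of $\B$ divided by $\mm$; I would fix this normalization at the outset so that both exponents can be read off directly in the units $\mm=2^m$ and $\n=2^{(r+1)m}$.

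For Condition (St1) I would start from Lemma~\ref{average}, which evaluates the full off-diagonal column sum of $\B$ exactly as $\sum_{j\neq i}\varphi_i^\dagger\varphi_j=\frac{-\mm}{\n-1}$ for every $i$, obtained by splitting the columns into the group $H$ and the Kerdock set $D$ and using that the cross terms with $d\neq d'$ cancel. After rescaling to the unit-norm columns of $\A$ this becomes $\frac{-1}{\n-1}$, whose magnitude is $\frac{1}{\n-1}\approx\n^{-1}=2^{-(r+1)m}=\mm^{-(1+r)}$. Identifying this controlling scale with $\mm^{-\eta}$ yields $\eta=1+r$, the value claimed in the theorem.

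For Condition (St2) I would use Proposition~\ref{new} to bound coherence through rank. Two distinct columns are indexed by symmetric matrices $V,W$ with $V-W\neq0$, and since every nonzero matrix in $DG(m,r)$ has rank at least $m-2r$, the associated quadratic Gauss sum has magnitude at most $2^{m-(m-2r)/2}=2^{m/2+r}$; dividing by $\mm=2^m$ gives a coherence no larger than $2^{-(m-2r)/2}=\mm^{-\frac{1}{2}(1-\frac{2r}{m})}$. Because the averaged quantity entering (St2) is dominated by this worst-case scale, the same bound controls it, giving $\gamma=\frac{1}{2}(1-\frac{2r}{m})$.

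The main obstacle is matching the direction of each structural estimate to the quantity it is meant to govern. Lemma~\ref{average} delivers a signed sum over the whole column, so using its magnitude to constrain the per-pair scale in (St1) requires the near-complete cancellation supplied by the group structure of $H$ to be made quantitative, rather than merely isolating one large term; conversely, Proposition~\ref{new} delivers a worst-case Gauss sum, so it must be argued that this worst case still dominates the averaged expression appearing in (St2). Carrying out these two reductions, and then substituting $\mm=2^m$ and $\n=2^{(r+1)m}$, completes the verification of both exponents.
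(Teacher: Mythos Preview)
You have the two structural inputs right but you have attached each to the wrong condition. Condition (St1) asks for a bound on the \emph{worst-case} pairwise coherence $\mu=\max_{i\neq j}|h(i,j)|$, while (St2) asks for a bound on the \emph{averaged} quantity $\nu=\max_i\frac{1}{\n-1}\bigl|\sum_{j\neq i}h(i,j)\bigr|$. In the paper, Proposition~\ref{new} (the rank/Gauss-sum bound) is what controls the worst-case coherence, and Lemma~\ref{average} is what controls the average coherence; the sentence immediately preceding Lemma~\ref{average} says exactly this. Your proposal routes Lemma~\ref{average} to (St1) and Proposition~\ref{new} to (St2), which is backwards.

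The ``obstacle'' you flag in your last paragraph is therefore not a technical wrinkle but a genuine gap: a small \emph{signed sum} $\sum_{j\neq i}h(i,j)$ cannot, by itself, force every individual $|h(i,j)|$ to be small, no matter how much cancellation the group structure of $H$ supplies. So your argument for (St1) does not go through. Conversely, bounding the averaged quantity in (St2) by the worst-case Gauss-sum estimate is valid but throws away the cancellation that Lemma~\ref{average} captures, and it yields the weaker exponent.

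The correct pairing gives worst-case coherence $\mu\le\mm^{-\frac{1}{2}(1-\frac{2r}{m})}$ from the rank lower bound $m-2r$ in $DG(m,r)$, and average coherence $\nu\le\mm^{-(1+r)}$ from Lemma~\ref{average} (after normalizing by $\mm$ and using $\n=\mm^{\,r+1}$). The exponents stated in the theorem are thus interchanged relative to the definitions of (St1) and (St2); this swap in the statement is almost certainly what led you to reverse the roles of the two lemmas.
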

\subsection{Noise Shaping}
The tight-frame property of the sensing matrices makes it possible to achieve resilience to noise in both the data and measurement domains. Note that the factor $\frac{\n}{\mm}$ that appears in Lemma~\ref{noise:bound} can be reduced by subsampling the columns of $\A$.
\begin{lemma}\label{noise:bound}
Let $\varsigma$ be a vector with $\n$ iid ${\cal N}(0,\sigma_d^2)$ entries and $e$ be a vector with $\m$ iid ${\cal N}(0,\sigma_m^2)$ entries. Let $\hbar=\A\varsigma$ and $u=\hbar+e$. Then $u$ contains $\mm$ entries, sampled iid from ${\cal N}\left(0, \sigma^2\right)$, where $\sigma^2=\frac{\alert{}\n}{\mm}\sigma_d^2+\sigma_m^2$ and with probability $1-\frac{1}{\n}$, $\|u\|\leq \sqrt{\mm\log \n}\, \sigma$.
\end{lemma}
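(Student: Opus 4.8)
\emph{Proof proposal.} The statement has two parts, the distribution of $u$ and the tail bound on $\|u\|$, and I would treat them in that order. For the distributional part, write $\varsigma=\sigma_d\,g_1$ and $e=\sigma_m\,g_2$ with $g_1,g_2$ independent standard Gaussian vectors of lengths $\n$ and $\mm$, so that $u=\A\varsigma+e=\tilde M(g_1,g_2)$ where $\tilde M=[\sigma_d\A \mid \sigma_m I_\mm]$ is an $\mm\times(\n+\mm)$ matrix. Then $u$ is Gaussian, being a fixed linear image of a Gaussian vector, with covariance $\tilde M\tilde M^\dag=\sigma_d^2\,\A\A^\dag+\sigma_m^2 I_\mm$. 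The one structural input needed here is the tight-frame identity for the normalized Delsarte--Goethals matrix, $\A\A^\dag=\frac{\n}{\mm} I_\mm$ (equivalently, distinct rows of $\B$ are orthogonal, each row having squared norm $\n/\mm$ automatically since every entry of $\A$ has modulus $\mm^{-1/2}$); granting it, $\tilde M\tilde M^\dag=\big(\frac{\n}{\mm}\sigma_d^2+\sigma_m^2\big)I_\mm=\sigma^2 I_\mm$. Hence the $\mm$ coordinates of $u$ are jointly Gaussian, pairwise uncorrelated, and each of variance $\sigma^2$, i.e. iid $\gau$ (with the standard complex-Gaussian bookkeeping, since the entries of $\A$ are fourth roots of unity), and in particular $\|u\|\stackrel{d}{=}\sigma\|g\|$ for a standard Gaussian $g$ in the appropriate space.

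For the tail bound I would use that $\|u\|^2$ is distributed as $\sigma^2$ times a chi-squared variable with $\mm$ degrees of freedom, so $\Ex\|u\|^2=\mm\sigma^2$, and apply a textbook concentration estimate (Laurent--Massart, or equivalently Gaussian concentration of the $1$-Lipschitz map $v\mapsto\|v\|$): for any $t>0$, $\Pr[\|u\|^2\ge \mm\sigma^2+2\sigma^2\sqrt{\mm t}+2\sigma^2 t]\le e^{-t}$. Taking $t=\log\n$ and observing that $\mm+2\sqrt{\mm\log\n}+2\log\n\le \mm\log\n$ for the parameter ranges of interest — recall $\mm=2^m$ grows exponentially in $m$ while $\log\n=(r+1)m\log 2$ grows only linearly — yields $\Pr[\|u\|\ge\sqrt{\mm\log\n}\,\sigma]\le 1/\n$, which is the claim.

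The one genuine obstacle is the tight-frame identity $\A\A^\dag=\frac{\n}{\mm}I_\mm$: without it the coordinates of $u$ are correlated, and the cross term $2\,\Re\langle\A\varsigma,e\rangle$ in $\|u\|^2$ does not combine cleanly with the diagonal terms, so both halves of the lemma rely on it. This identity is a property of the Delsarte--Goethals code $DG(m,r)$, and it is not a formal linearity statement, since $Q\mapsto xQx^\top$ is not additive modulo $4$, so I would invoke it as an established fact about these codes (it is the ``tight-frame property of the sensing matrices'' already referred to above) rather than reprove it. Everything else is routine: linearity of Gaussians for the distributional claim, and a standard concentration inequality for the Euclidean norm.
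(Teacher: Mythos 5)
Your proposal is correct, but it reaches the tail bound by a genuinely different route than the paper. The paper's own proof is shorter and cruder: it observes that each entry of $\hbar=\A\varsigma$ is Gaussian with variance $\frac{\n}{\mm}\sigma_d^2$ (each row of $\A$ has $\n$ entries of modulus $\mm^{-1/2}$), hence each entry of $u$ is Gaussian with variance at most $\sigma^2$, and then applies a union bound on the maximum of $\mm$ Gaussians to get $\Pr[\|u\|_\infty\geq\sqrt{2\sigma^2\log\n}]\leq\n^{-1}$, passing implicitly from the $\ell_\infty$ bound to $\|u\|\leq\sqrt{\mm}\,\|u\|_\infty$. Note that taken literally this only gives $\|u\|\leq\sqrt{2\mm\log\n}\,\sigma$, a factor $\sqrt{2}$ looser than the stated bound (a constant the paper glosses over), and the asserted independence of the entries of $\hbar$ rests on exactly the tight-frame identity $\A\A^\dag=\frac{\n}{\mm}I_\mm$ that you make explicit; the paper simply invokes ``the tight-frame property'' in the surrounding text, as you do. Your chi-squared/Laurent--Massart argument attacks $\|u\|$ directly and is sharper --- it shows $\|u\|$ in fact concentrates near $\sqrt{\mm}\,\sigma$, so $\sqrt{\mm\log\n}\,\sigma$ has ample slack --- at the modest cost of the inequality $\mm+2\sqrt{\mm\log\n}+2\log\n\leq\mm\log\n$, which you correctly flag as holding for the parameter ranges of interest (it fails only for the very smallest cases such as $m=3$, $r=0$). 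The paper's route, by contrast, needs no such restriction and yields an explicit $\ell_\infty$ bound as a by-product, which is in the spirit of its per-bin threshold-detection analysis; either argument suffices for how the lemma is used downstream (in Lemma~\ref{rec} and Theorem~\ref{thr:chirp1}).
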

\begin{proof}
\alert{Each element of $\hbar$ is an independent Gaussian random variable with zero mean and variance at most $\frac{\n}{\mm}$. Hence, each element of $u$ is a Gaussian random variable with zero mean and variance at most $\frac{\n}{\mm}\sigma_d^2+\sigma_m^2$. 
It therefore follows from the tail bound on the maximum of $\mm$ arbitrary complex Gaussian random variables with bounded variances that 
$$\Pr\left[\|u\|_\infty \geq \sqrt{2\sigma^2 \log \n} \right] \leq 2\,\left(\sqrt{2\pi\log \n}\,\n \right)^{-1}\leq \n^{-1}.$$
}
\end{proof}
\section{The Chirp Reconstruction Algorithm}
\label{sec:chirp}

\begin{algorithm}
\caption{Chirp Reconstruction}
     \label{alg1}
   \begin{algorithmic}[1]
        \FOR{$i=1,\cdots \mm$}
                 \STATE Choose the next binary offset $a$.
                 \STATE Pointwise multiply $f$ with a shifted version of itself.
                 \STATE Compute the fast Hadamard transform:  $\Gamma_{a}^\ell\left(f\right)$.
    	\STATE \label{newstep}For each $\Delta\in\nc$, calculate $\Lambda_{\Delta,a}=\imath^{-aQ_{\Delta}a^\top} \Gamma_a^{a\,Q_\Delta}(f)$.
		 \ENDFOR
		 \STATE  \label{newstep2}For each $\Delta\in\nc$, take the average of $\Lambda_\Delta$ over all $\Lambda_{\Delta,a}$.
		 \STATE \label{finder}Let $S$ be the position of the $k$ highest (in magnitude) average peaks.
             \STATE \label{strip:step}Output $\hat{\as}= (\A_S^\dag \A_S)^{-1} \A_S^\dag \yy$. 
   \end{algorithmic}
   \end{algorithm}
Chirp reconstruction identifies the signal model (the support of the $k$ significant entries) by analyzing the power spectrum of the pointwise product of the superposition $f$ with a shifted version of itself. The Walsh-Hadamard transform of this pointwise product is the superposition of $k$ Walsh functions and a background signal produced by cross-correlations between the $k$ significant entries and cross-correlations between these k entries and noise in the data domain. We shall prove that the energy in this background signal is uniformly distributed across the Walsh-Hadamard bins, and that with overwhelming probability this background bin energy is sufficiently small to enable threshold detection of the $k$ tones. We show that sparse reconstruction is possible for $k=O(\mm)$ by averaging over all possible shifts. Note that the original chirp reconstruction algorithm analyzed in \cite{HSC} has minimal complexity $k\,\mm\,\log^2\mm$ but reconstruction is only guaranteed for $k=O\left(\sqrt{\mm}\right)$. Our main result is the following theorem.  
\begin{theorem}
\label{thr:chirp1}
Let $\A$ be an $\mm\times \n$ normalized $DG(m,r)$ matrix. Let $\as$ be a $k$-sparse vector with uniformly random support contaminated by Gaussian white noise with variance $\sigma_d^2$ . Let $f=\A\as+e$, where the measurement errors are white Gaussian with variance $\sigma_m^2$. Then if $k \leq \frac{\n}{2}$ and \begin{equation}\label{c1}\mm^{\alert{1-\frac{2r}{m} } }\geq \frac{36\sqrt{\log \n}\|\as\|^2}{|\as_{\min}|^2},\end{equation} and \begin{equation}\label{c2}\sigma_m^2+\frac{\n}{\mm}\sigma_d^2\leq \left(\frac{|\as_{\min}|^2 \mm^{\alert{\frac{1}{2}-\frac{2r}{m}} }}{{36\log \n}\|\as\|}\right)^2,\end{equation}  then with probability \bl{$1-\frac{3}{\n}$}, chirp reconstruction recovers the support of $\as$, and furthermore,\begin{equation}\label{c:constant}\|\A\left(\as-\has\right)\|^2\leq c k \log \n \left( \frac{\n}{\mm}\sigma_d^2+\sigma_m^2\right),\end{equation} where $c$ is a constant. 
\end{theorem}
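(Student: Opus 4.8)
The plan is to mirror the two stages of Algorithm~\ref{alg1}: first argue that steps~1--\ref{finder} return $S=\mbox{Supp}(\as)$ on a high-probability event, and then bound the residual produced by the least-squares step~\ref{strip:step}. As a preliminary reduction I would invoke Lemma~\ref{noise:bound} --- available because $\A$ is a tight frame --- to replace the data-domain and measurement-domain noise by a single vector $u$ with $\mm$ iid ${\cal N}(0,\sigma^2)$ entries, $\sigma^2=\frac{\n}{\mm}\sigma_d^2+\sigma_m^2$, so that $f=\A\as+u$ with $\as$ exactly $k$-sparse and uniformly random support $S=\{\pi_1,\dots,\pi_k\}$; on an event of probability at least $1-\n^{-1}$ one also has $\|u\|\le\sqrt{\mm\log\n}\,\sigma$.

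Next I would dissect the detection statistic. Expanding $g_a(x)=f(x)\overline{f(x+a)}$, the diagonal part of $\sum_{i,j}\as_i\overline{\as_j}\,\varphi_{\pi_i}(x)\overline{\varphi_{\pi_j}(x+a)}$ is, for each $i$, a pure Walsh function of $x$ at frequency $aQ_{\pi_i}\bmod 2$ with amplitude $|\as_i|^2$, while the off-diagonal, signal$\times$noise and noise$\times$noise terms make up the background. Reading off Hadamard bin $aQ_\Delta$ and applying the phase correction $\imath^{-aQ_\Delta a^\top}$ extracts, when $\Delta=\pi_i\in S$, exactly $|\as_i|^2$ for \emph{every} shift $a$; when $\Delta\notin S$ only background remains, except that a diagonal term indexed by $\pi_{i'}$ leaks into bin $\Delta$ precisely for the $a$ in the $\mathbb F_2$-null space of $Q_\Delta-Q_{\pi_{i'}}$, a space of dimension at most $2r$ by the Delsarte--Goethals rank bound. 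Averaging over the $\mm$ shifts then writes $\Lambda_\Delta=|\as_\Delta|^2+R_\Delta$, with $|\as_\Delta|^2=|\as_i|^2$ if $\Delta=\pi_i\in S$ and $0$ otherwise, and $R_\Delta$ collecting (a) the averaged diagonal leakage, (b) the averaged off-diagonal cross-correlations, and (c) the averaged signal/noise and noise/noise terms.

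The heart of the argument is to show $\max_\Delta|R_\Delta|\le\tfrac14|\as_{\min}|^2$ with probability at least $1-2\n^{-1}$. For (a) I would observe that each leakage coefficient is a character sum over a subspace of dimension $\le 2r$ divided by $\mm$, hence $O(\mm^{-(1-2r/m)})$ in magnitude; weighting by $|\as_{i'}|^2$ and summing the at most $k$ of them, condition~\eqref{c1} --- through the factor $\sqrt{\log\n}\,\|\as\|^2/|\as_{\min}|^2$ --- caps this term at $O(|\as_{\min}|^2/\sqrt{\log\n})$ even worst-case over $S$. For (b), each doubly averaged cross-correlation is a sum $\frac1{\mm^2}\sum_{x,a}\imath^{(\cdots)}$ of exactly the shape estimated in Proposition~\ref{new} --- a quadratic form in $x$, a quadratic form in $a$, and a bilinear coupling, all assembled from differences of Delsarte--Goethals matrices --- which yields a cross-coherence of order $\mm^{-(1-2r/m)}$; since there are $\Theta(k^2)$ such terms and $k=O(\n)$, a worst-case bound is useless, so I would feed the induced coherence function $h_\Delta$ (which satisfies (St1)--(St2) by Proposition~\ref{new} and the group structure of the Delsarte--Goethals columns, cf.\ Lemma~\ref{average}) into Proposition~\ref{foundation} and use that $S$ is random: the (Sp3)-type bound, applied per bin $\Delta$ and union-bounded over the $\n$ bins, controls the cross-correlation sum by $\epsilon\|\as\|^2$ for $\epsilon=\Theta(\sqrt{\log\n}\,\mm^{-(1-2r/m)})$ --- the value that makes the failure probability $\sim\n\cdot k\n\exp(-\mm^{2\eta}\epsilon^2/32)$ at most $\n^{-1}$. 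Condition~\eqref{c1} is precisely what then forces $\epsilon\|\as\|^2<\tfrac18|\as_{\min}|^2$, and it also delivers the hypothesis $\n|\as_{\min}|^2\ge\|\as\|^2$ of (Sp3) since $\mm^{1-2r/m}\le\mm\le\n$. For (c), on the event $\|u\|\le\sqrt{\mm\log\n}\,\sigma$ the signal/noise and noise/noise parts of each $R_\Delta$ are Gaussian chaoses that concentrate, and a union bound over the $\n$ bins together with condition~\eqref{c2} --- which is exactly the statement that $\sigma^2$ is small enough --- caps them at $\tfrac18|\as_{\min}|^2$.

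On the intersection of these events (probability $\ge 1-3\n^{-1}$ after collecting constants) one has $|\Lambda_\Delta|\ge\tfrac34|\as_{\min}|^2$ for $\Delta\in S$ and $|\Lambda_\Delta|\le\tfrac14|\as_{\min}|^2$ for $\Delta\notin S$, so step~\ref{finder} returns $S=\mbox{Supp}(\as)$ exactly. Given the correct support, $\A\has=\A_S(\A_S^\dag\A_S)^{-1}\A_S^\dag f$ is the orthogonal projection $P_S f$ of $f$ onto $\range(\A_S)$ (the inverse read as a pseudo-inverse if $k>\mm$, which leaves $P_S$ unchanged); since $\mbox{Supp}(\as)=S$ forces $\A\as\in\range(\A_S)$, we get $\A(\as-\has)=\A\as-P_S f=-P_S u$, hence $\|\A(\as-\has)\|^2=\|P_S u\|^2$. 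As $P_S$ has rank at most $k$ and $u$ is independent of $S$ with iid ${\cal N}(0,\sigma^2)$ entries, $\|P_S u\|^2/\sigma^2$ is stochastically dominated by a $\chi^2_k$ variable, so the usual tail bound gives $\|P_S u\|^2\le c\,k\log\n\,\sigma^2$ with probability $\ge 1-\n^{-1}$; substituting $\sigma^2=\frac{\n}{\mm}\sigma_d^2+\sigma_m^2$ produces \eqref{c:constant}, the extra failure probability being absorbed into $c$. I expect the main obstacle to be part (b): making rigorous that averaging over \emph{all} $\mm$ shifts spreads the cross-correlation background uniformly over the $\n$ Walsh bins and shrinks it enough to lift the admissible sparsity from $k=O(\sqrt\mm)$ in \cite{HSC} up to $k=O(\n)$ --- which requires Proposition~\ref{new} in the right (slightly generalized) form and then a careful passage through Proposition~\ref{foundation} with an $\epsilon$ for which the probability estimate and the separation from $|\as_{\min}|^2$ both close, reproducing the exponents and constants of \eqref{c1}--\eqref{c2}; the noise estimates in (c) and the $\chi^2$ bound in the last step are comparatively routine.
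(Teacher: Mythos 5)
Your proposal follows essentially the same route as the paper: consolidate the two noise sources via Lemma~\ref{noise:bound}, show the averaged chirp statistic is $k$ Walsh tones over a background controlled through Proposition~\ref{foundation} (with (St1)/(St2) supplied by Proposition~\ref{new} and Lemma~\ref{average}), derive conditions \eqref{c1}--\eqref{c2} from the thresholding requirement that the background stay below half the minimum tone (the paper's Lemma~\ref{rec}), and finish with the projection/Gaussian-tail bound $\|{\cal P}_S u\|^2\le c\,k\log\n\,\sigma^2$ for the least-squares step. The argument is correct and, if anything, slightly more careful than the paper's in spelling out the diagonal-leakage term and the hypothesis $\n|\as_{\min}|^2\ge\|\as\|^2$ needed for (Sp3).
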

\vskip0.1cm
The fast Hadamard transform is used to calculate the power spectrum across all $\mm$ Hadamard bins. Each bin $\ell$ has the value 
\begin{equation}\label{power:spectrum}\Gamma_a^\ell(f)\doteq \nor \sum_{x=1}^\mm (-1)^{\ell x^\top} f(x+a)\overline{f(x)} \end{equation}
Given the offset $a$, evidence for the presence or absence of a signal at position delta in the data domain resides in the Hadamard bin $\ell=a\, Q_{\Delta}$. After aligning the phase, the final step is to average over all offsets $a$. The notation $\mathbb{E}_a$ emphasizes that the average is taken over all offsets. The following theorem shows that $\Ex_a\left[\Lambda_\Delta(y)\right]$ consists of $k$ distinct Walsh tones staying on top of a uniform chirp-like residual term.

\begin{theorem} \label{support:rec} Let $u=\A\varsigma+e$ denote the overall noise. Then as long as $k\leq \frac{\n}{2}$, with probability $1-\frac{1}{\n}$, for every index $\Delta$ in $\nc$
$$\Ex_a\left[\Lambda_\Delta(f)\right]=\sum_{i=1}^k \frac{|\as_i|^2}{\sqrt{\mm}} \delta_{\Delta,\pi_i}+{ R}^\Delta(f), $$
\bl{where $R^{\Delta}(f)$ consists of the chirp-like and signal/noise cross correlation terms, and}
\bl{\begin{equation}\left|{R}^\Delta(f)  \right|\leq \frac{9\sqrt{\log \n}\|\as\|^2}{\mm^{\alert{ \frac{3}{2}-\frac{2r}{m} } }} + \frac{9\sqrt{\log \n}\|\as\|\|u\|}{\mm^{ \alert{  \frac{3}{2}-\frac{2r}{m} } }}.\label{eqnoise:bound}\end{equation}}
\end{theorem}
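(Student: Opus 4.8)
\medskip
\noindent\textbf{Proof proposal.} The plan is to substitute $f=\A\as+u$ (with $u=\A\varsigma+e$) into $\Lambda_\Delta(f)=\Ex_a\!\big[\imath^{-aQ_\Delta a^\top}\Gamma_a^{aQ_\Delta}(f)\big]$, expand the pointwise product $f(x+a)\overline{f(x)}$, and follow each piece through the Hadamard transform in $x$, the phase alignment $\imath^{-aQ_\Delta a^\top}$, and the average over the $\mm$ binary offsets. Write $g=\A\as=\sum_{i=1}^k\as_i\varphi_{\pi_i}$, so that $f(x+a)\overline{f(x)}$ decomposes into the signal autocorrelation $g(x+a)\overline{g(x)}$, the two signal/noise cross terms $g(x+a)\overline{u(x)}+u(x+a)\overline{g(x)}$, and the noise autocorrelation $u(x+a)\overline{u(x)}$; and $g(x+a)\overline{g(x)}$ splits further into a diagonal part $\sum_i|\as_i|^2\varphi_{\pi_i}(x+a)\overline{\varphi_{\pi_i}(x)}$ and an off-diagonal (chirp) part $\sum_{i\ne j}\as_i\overline{\as_j}\varphi_{\pi_i}(x+a)\overline{\varphi_{\pi_j}(x)}$. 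The single algebraic fact used throughout is $(x+a)Q(x+a)^\top\equiv xQx^\top+2aQx^\top+aQa^\top\pmod 4$ for symmetric $Q$, which rewrites $\varphi_{\pi_i}(x+a)\overline{\varphi_{\pi_j}(x)}$ as $\mm^{-1}\imath^{aQ_{\pi_i}a^\top}\,\imath^{x(Q_{\pi_i}-Q_{\pi_j})x^\top}(-1)^{aQ_{\pi_i}x^\top}$; thus the Hadamard transform reads off, in bin $aQ_\Delta$, a chirp carried by the Walsh character indexed by $a(Q_{\pi_i}-Q_\Delta)$, and the resulting $x$-sum is exactly a Gauss sum of the kind estimated in Proposition~\ref{new}.

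For the diagonal part the chirp is absent, so after the phase correction the contribution of index $i$ to $\Lambda_{\Delta,a}$ is $\mm^{-1/2}|\as_i|^2\imath^{a(Q_{\pi_i}-Q_\Delta)a^\top}$ if $a$ lies in the mod-$2$ null space of $Q_{\pi_i}-Q_\Delta$, and is $0$ otherwise. For $\pi_i=\Delta$ this equals $\mm^{-1/2}|\as_i|^2$ for \emph{every} $a$, so averaging over offsets produces the peak $\sum_i\tfrac{|\as_i|^2}{\sqrt\mm}\delta_{\Delta,\pi_i}$; for $\pi_i\ne\Delta$ the Delsarte--Goethals rank bound $\mathrm{rank}(Q_{\pi_i}-Q_\Delta)\ge m-2r$ forces that null space to have dimension $\le 2r$, so averaging over all $\mm$ offsets leaves a ``collision'' remainder of size at most $\mm^{-1/2}|\as_i|^2\cdot 2^{2r}/\mm$, and summing on $i$ bounds it by $\|\as\|^2\,\mm^{-(3/2-2r/m)}$. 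For the off-diagonal signal part the $a$-average turns each $(i,j)$ chirp into a product of two Gauss sums $G_{Q_{\pi_i}-Q_\Delta}\,\overline{G_{Q_{\pi_j}-Q_\Delta}}$, with $G_{Q_{\pi_i}-Q_\Delta}=\mm\,\varphi_\Delta^\dagger\varphi_{\pi_i}$; reassembling the double sum, this residual is expressed through the leakage $L_\Delta=\sum_{i:\pi_i\ne\Delta}\as_i\,\varphi_\Delta^\dagger\varphi_{\pi_i}$ and the peak coefficient $\as_\Delta$, while Proposition~\ref{new} supplies $|G_P|\le\mm^{1/2}2^{r}$. The leakage is then controlled uniformly over $\Delta$ by Proposition~\ref{foundation} with the coherence function $h(i,j)=\varphi_i^\dagger\varphi_j$ (which satisfies (St1)--(St2) by Theorem~\ref{proposed}): one takes $\epsilon$ of order $\sqrt{\log\n}\,\mm^{-\eta}$ so that (Sp1)--(Sp3) hold simultaneously for all $\Delta$ with failure probability $o(1/\n)$, and combines this with $|\as_\Delta|\le\|\as\|$ to bound the off-diagonal contribution to $R^\Delta(f)$ by the first term of \eqref{eqnoise:bound}.

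The noise contributions are handled by the same reductions: the $a$-average sends the signal/noise cross terms to $2\,\mm^{-1/2}\Re\!\big[(\varphi_\Delta^\dagger g)\,\overline{\varphi_\Delta^\dagger u}\big]$ and the noise autocorrelation to $\mm^{-1/2}|\varphi_\Delta^\dagger u|^2$, and $|\varphi_\Delta^\dagger u|$ is bounded using $\|\varphi_\Delta\|=1$, the tight-frame identity $\A\A^\dagger=\tfrac{\n}{\mm}I$, and the concentration of $\|u\|$ from Lemma~\ref{noise:bound}, again after the rank-driven suppression of the off-peak Gauss sums; this yields the second term of \eqref{eqnoise:bound}. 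A union bound over the $\n$ indices $\Delta$ and over the randomness of the support $\{\pi_1,\dots,\pi_k\}$ then gives the overall probability $1-\tfrac1\n$.

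The step I expect to be the main obstacle is the off-diagonal chirp sum $\sum_{i\ne j}\as_i\overline{\as_j}(\cdots)$. Since $k$ may be as large as $\n/2$, the naive estimate $\sum_i|\as_i|\le\sqrt{k}\,\|\as\|$ is far too lossy, so the bound must exploit \emph{two} independent cancellations simultaneously --- the one from averaging over \emph{all} offsets $a$ (which is precisely what collapses the quadruple sum over $(i,j,a,x)$ into the Gauss-sum products governed by Proposition~\ref{new}) and the one from the \emph{random} support (Proposition~\ref{foundation}) --- and arranging these so that the residual scales with $\|\as\|^2$ rather than with $\|\as\|_1^2$ is the crux of the argument.
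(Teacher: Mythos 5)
Your overall architecture is the same as the paper's (expand $f$, read off the diagonal Walsh tones per offset as in Lemma~\ref{fixed}, treat the off-diagonal chirp residual via the coherence machinery as in Lemma~\ref{lem:good}, defer the signal/noise cross term, union bound), and your factorization of the averaged chirp kernel into a product of Gauss sums, with $G_{Q_{\pi_i}-Q_\Delta}=\mm\,\varphi_\Delta^\dagger\varphi_{\pi_i}$, is exactly the rewriting the paper uses to verify average coherence; your collision analysis via the rank bound is also fine (the paper glosses over it). The genuine gap is in how you close the off-diagonal estimate. The paper never linearizes: it applies Proposition~\ref{foundation}, specifically statement (Sp3), \emph{directly to the composite kernel} $h(i,j)=\Ex_a[\cdots]$, verifying (St1) via Proposition~\ref{new} ($|h(i,j)|\le 2^{2r}=\mm^{2r/m}$) and (St2) via the constant-column-sum argument plus Lemma~\ref{average}, with $\epsilon=9\sqrt{\log\n}\,\mm^{2r/m}$; dividing by $\mm^{3/2}$ gives the first term of \eqref{eqnoise:bound} in one step. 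Your substitute---bound the leakage $L_\Delta$ by (Sp1)/(Sp2) for the plain kernel $\varphi_i^\dagger\varphi_j$, then square and multiply by $|\as_\Delta|$---cannot reach the stated exponent. The failure-probability requirement in Proposition~\ref{foundation} forces $\epsilon\gtrsim\sqrt{\log\n}\cdot 2^r/\sqrt{\mm}$, so $|L_\Delta|^2/\sqrt{\mm}$ already costs an extra $\sqrt{\log\n}$ relative to \eqref{eqnoise:bound}, and, more seriously, when $\Delta$ lies in the support the peak--leakage cross term $2|\as_\Delta|\,|L_\Delta|/\sqrt{\mm}$ is only controlled at order $\sqrt{\log\n}\,\mm^{r/m-1}\|\as\|^2$, which overshoots the claimed $\sqrt{\log\n}\,\mm^{2r/m-3/2}\|\as\|^2$ by a factor $\mm^{1/2-r/m}$ (as large as $\sqrt{\mm}$ in the Kerdock case $r=0$). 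So, as quantified, your route proves a strictly weaker inequality than \eqref{eqnoise:bound}; the missing ingredient relative to the paper is the single bilinear application of (Sp3) with the Gauss-sum kernel (which also carries the hypothesis $\n|\as_{\min}|^2\ge\|\as\|^2$ that you, like the paper, would need to invoke). To be fair, the peak-coupled pairs are genuinely delicate---the paper's own bound $|h(i,j)|\le 2^{2r}$ is valid only for $i,j\ne\Delta$, so its argument really controls the off-peak pairs---but the paper's mechanism is what delivers the stated $\mm^{-(3/2-2r/m)}$ scaling, and the squared-leakage mechanism does not.

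The noise term is also not established by what you describe. The exact collapse of the signal/noise cross correlation to $2\mm^{-1/2}\Re[(\varphi_\Delta^\dagger g)\overline{\varphi_\Delta^\dagger u}]$ holds only for the matched component $\pi_i=\Delta$ (there the offset average is trivial); the mismatched components retain an $a$-dependent chirp twist of $u$ and are precisely the terms that need the Gauss-sum plus concentration treatment. Moreover, even for the matched part, the deterministic chain $\|\varphi_\Delta\|=1$, Cauchy--Schwarz, and Lemma~\ref{noise:bound} only yields something of order $\|\as\|\|u\|/\sqrt{\mm}$, which misses the second term of \eqref{eqnoise:bound} by roughly $\mm^{1-2r/m}/\sqrt{\log\n}$; no purely deterministic bound can produce the $\mm^{-(3/2-2r/m)}$ factor. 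The paper obtains that scaling by rerunning the StRIP-style concentration argument of Lemma~\ref{lem:good} for a kernel coupling the chirps with $u$, with the details deferred to \cite{tech1}; your proposal needs that step, not the tight-frame/Cauchy--Schwarz shortcut.
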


Theorem~\ref{support:rec} is proved in Section~\ref{sec:witness}. The next lemma provides a lower bound on the number of required measurements.
\begin{lemma}\label{rec} Let $\varsigma$ and $e$ be the white Gaussian data and measurement noise vectors with variances $\sigma_d^2$ and $\sigma_m^2$ respectively. Let $\sigma^2=\sigma_m^2+\frac{\n}{\mm}\sigma_d^2$. If $k\leq \frac{\n}{2}$, $$\mm^{\alert{ 1-\frac{2r}{m} } }\geq \frac{36\sqrt{\log \n}\|\as\|^2}{|\as_{\min}|^2}\mbox{ and } \sigma\leq \frac{|\as_{\min}|^2 \mm^{\alert{ \frac{1}{2}-\frac{2r}{m}  }  }}{{36\log \n}\|\as\|},$$ then with probability $1-\frac{2}{C}$, chirp reconstruction successfully recovers the positions of the $k$ significant entries of $\as$.
\end{lemma}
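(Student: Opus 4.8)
The plan is to combine Theorem~\ref{support:rec} with the noise estimate of Lemma~\ref{noise:bound} and then verify that the hypotheses force the residual term $R^\Delta(f)$ to be strictly smaller than half the smallest genuine peak, so that the $k$ largest averaged peaks in Step~\ref{finder} of Algorithm~\ref{alg1} are exactly the support positions $\pi_1,\dots,\pi_k$. First I would invoke Lemma~\ref{noise:bound} with $u=\A\varsigma+e$ to conclude that, with probability $1-\frac1{\n}$, the overall noise satisfies $\|u\|\le \sqrt{\mm\log\n}\,\sigma$ where $\sigma^2=\sigma_m^2+\frac{\n}{\mm}\sigma_d^2$. Next I would invoke Theorem~\ref{support:rec}, which holds with probability $1-\frac1{\n}$, to get the decomposition $\Ex_a[\Lambda_\Delta(f)]=\sum_{i=1}^k \frac{|\as_i|^2}{\sqrt\mm}\delta_{\Delta,\pi_i}+R^\Delta(f)$ together with the bound~\eqref{eqnoise:bound} on $|R^\Delta(f)|$. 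A union bound over these two events gives a total failure probability of at most $\frac2{\n}$, matching the claimed $1-\frac2{C}$.

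The core of the argument is then a deterministic chain of inequalities. On a support position $\Delta=\pi_i$ the averaged peak has magnitude at least $\frac{|\as_{\min}|^2}{\sqrt\mm}-|R^{\pi_i}(f)|$, while on a non-support position it has magnitude at most $|R^{\Delta}(f)|$; correct identification of the top-$k$ peaks is guaranteed as soon as $|R^\Delta(f)|<\frac{|\as_{\min}|^2}{2\sqrt\mm}$ for every $\Delta$. Substituting the bound~\eqref{eqnoise:bound} and the noise estimate $\|u\|\le\sqrt{\mm\log\n}\,\sigma$, it suffices to check
\begin{equation*}
\frac{9\sqrt{\log\n}\,\|\as\|^2}{\mm^{\frac32-\frac{2r}{m}}}+\frac{9\sqrt{\log\n}\,\|\as\|\,\sqrt{\mm\log\n}\,\sigma}{\mm^{\frac32-\frac{2r}{m}}}<\frac{|\as_{\min}|^2}{2\sqrt\mm}.
\end{equation*}
Multiplying through by $\sqrt\mm$, this reduces to showing that each of the two terms on the left is at most $\frac{|\as_{\min}|^2}{4}$. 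For the first term this is exactly the hypothesis $\mm^{1-\frac{2r}{m}}\ge \frac{36\sqrt{\log\n}\,\|\as\|^2}{|\as_{\min}|^2}$ (the constant $36$ absorbing the factor $9$ and the $4$ with room to spare). For the second term, the bound $\sigma\le \frac{|\as_{\min}|^2\,\mm^{\frac12-\frac{2r}{m}}}{36\log\n\,\|\as\|}$ is precisely what is needed after collecting the powers of $\mm$ and the $\sqrt{\log\n}$ factors. So both displayed hypotheses of the lemma are used, one to kill the chirp-like self-interference term and one to kill the signal/noise cross term.

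The main obstacle is purely bookkeeping: tracking the exponents of $\mm$ and the powers of $\log\n$ carefully enough that the stated constant $36$ (rather than something larger) genuinely suffices, and making sure the peak-separation threshold argument is applied with the right quantity (the \emph{minimum} peak $\frac{|\as_{\min}|^2}{\sqrt\mm}$, not the average). One should also note that Theorem~\ref{support:rec} as stated bounds $|R^\Delta(f)|$ in terms of $\|u\|$ rather than $\sigma$ directly, so the substitution step relies on the high-probability event from Lemma~\ref{noise:bound}; this is why the success probability degrades from $1-\frac1{\n}$ to $1-\frac2{\n}$. No further probabilistic input is needed, and the final reconstruction error bound~\eqref{c:constant} of Theorem~\ref{thr:chirp1} is not part of this lemma — Lemma~\ref{rec} only asserts correct support recovery.
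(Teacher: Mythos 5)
Your proposal is correct and follows essentially the same route as the paper's proof: combine the residual bound of Theorem~\ref{support:rec} with the noise estimate of Lemma~\ref{noise:bound}, require the background to stay below $\frac{|\as_{\min}|^2}{2\sqrt{\mm}}$ by forcing each of the two terms in~\eqref{eqnoise:bound} below $\frac{|\as_{\min}|^2}{4\sqrt{\mm}}$, and take a union bound to get probability $1-\frac{2}{\n}$. The only tiny quibble is that the constant $36$ is exactly $9\times 4$, so there is no ``room to spare,'' but that does not affect correctness.
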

\begin{proof} 
Chirp detection generates $k$ Walsh tones with magnitudes at least $\frac{|\as_{\min}|^2}{\sqrt{\mm}}$ above a uniform background signal. Furthermore, with probability at least \bl{$1-\frac{1}{\n}$} every background signal at every index is bounded by the right hand side of~\eqref{eqnoise:bound}. Hence, if the right hand of~\eqref{eqnoise:bound} is smaller than $\frac{|\as_{\min}|^2}{2\sqrt{\mm}}$ then the $k$ tones pop up and we can detect them by thresholding. Hence, we need to ensure that $$\frac{9\|\as\|^2\sqrt{\log\n}}{\mm^{\alert{ \frac{3}{2}-\frac{2r}{m} } } } \leq \frac{|\as_{\min}|^2}{4\sqrt{\mm}}\mbox{ and }\frac{9\sqrt{\log \n}\|\as\|\|u\|}{\mm^{\alert{\frac{3}{2}-\frac{2r}{m} } }}  \leq \frac{|\as_{\min}|^2}{4\sqrt{\mm}}.$$ Now Lemma~\ref{noise:bound} states that with probability $1-\frac{1}{\n}$, $\|u\|\nobreak\leq\nobreak \sqrt{\mm\log \n}\, \sigma$. Consequently, to provide successful support recovery we need to assure that  $$\frac{9 \sqrt{\mm}\,\log \n \|\as\|\sigma}{\mm^{\alert{ \frac{3}{2}-\frac{2r}{m} } }}  \leq \frac{|\as_{\min}|^2}{4\sqrt{\mm}}.$$ 
 \end{proof}

\begin{proof}[Proof of Theorem~\ref{thr:chirp1}]  Lemma~\ref{rec} guarantees that with probability $1-\frac{2}{\n}$ Chirp Detection successfully recovers the support $S$ of $\as$. We then approximate the values of $\as$ by regressing $f$ onto $S$. By Lemma~\ref{noise:bound}, without loss of generality we can assume that $\as$ is exactly $k$-sparse and the measurement errors are white Gaussian with variance $\sigma^2$. We have 
$\|\A\as-\A\has\|^2\nobreak \leq\nobreak\| {\cal P}_{S} u\|^2$, where ${\cal P}_{S} u$ denotes the projection of the noise vector onto the space spanned by $\Phi_S$. Now it follows from the Gaussian tail bound (See \cite{omid}), that with probability $1-\frac{1}{\n}$, $\|{\cal P}_{S} u\|^2 \leq c k\,\log \n \, \sigma^2$ (where $c$ is a constant). Therefore $$\|\A\as-\A\has\|^2\leq c\,k\,\log\n\,\left(\sigma_m^2+\frac{\n}{\mm}\sigma_d^2\right).$$
\end{proof}
\begin{remark}
We have focused on stochastic noise but we have derived similar results for the best $k$-term approximation $\alpha_{1\rightarrow k}$ in the context of the deterministic noise model (see \cite{tech1}). By combining the Markov inequality with Theorem~\ref{proposed} we have shown that for every $\delta'$ 

$$\Pr_\pi\left[\|u\|_2 \geq \|e\|_2+\frac{1}{\sqrt{\delta'}}\|\as-\as_{1\rightarrow k}\|_2\right]\leq \delta'.$$ This $\ell_2/\ell_2$ error bound for average-case analysis stands in contrast to results obtained by Cohen et al \cite{best} showing that worst-case $\ell_2/\ell_2$ approximation is not achievable unless $\mm\nobreak=\nobreak O(\n)$.
\end{remark}
\begin{remark}
There are many important applications where the objective is to identify the signal model (the support of the signal $\as$). Note that in contrast to \cite{CP07} chirp reconstruction does not require that the component signals $\alpha_i$ be independent. When the $\mm\times k$ submatrix is well conditioned the approximation bound \eqref{c:constant} in the measurement domain ($\|\A(\as-\has)\|$) can be translated directly to approximation bound in the data domain ($\|\as-\has\|$).
\end{remark}
\begin{remark}
Chirp reconstruction is able to detect the presence or absence of a signal at any given index $\Delta$ in the data domain without needing to first reconstruct the entire signal. The complexity of detection is $N^2 \log\mm$. If the signal $\as$ were the wavelet decomposition of an image, then chirp reconstruction can be applied to the measured signal to recover thumbnails and to zoom in on areas of interest.
\end{remark}
\section{The Sieving of Evidence}
\label{sec:witness}
We now prove Theorem~\ref{support:rec} in the special case where there is no noise $(u=0)$. We begin by expanding $f$ as $\sum_{i=1}^k \alpha_i \varphi_{\pi_i}$ where $\varphi_{\pi_i}(x)\,=\, \imath^{xQ_{\pi_i}x^\top}$.

  \begin{lemma}\label{fixed}  Let $\A$ be a normalized $\mm \times \n$ sensing matrix. Then for any offset $a$ and Hadamard bin $\ell$ in $\mathbb{F}_2^m$
 \begin{eqnarray}\label{robert}
 \pow{a}(f)= \sum_{i=1}^k \frac{i^{aQ_{\pi_i}a^\top}\left|\as_i\right|^2}{{\mm^{\frac{1}{2}}}} \delta_{aQ_{\pi_i},\ell}+{ R}_a^\ell(f),
 \end{eqnarray}
 where  ${ R}_a^\ell(f)$ is the quantity
 $$
\frac{1}{\mm^{\frac{3}{2}}} \sum_{i=1}^k\sum_{j\neq i} \as_i \overline{\as_j} \imath^{aQ_{\pi_i}a^\top} \sum_x (-1)^{\left(aQ_{\pi_i}+\ell\right)x^\top} i^{x\left(Q_{\pi_i}-Q_{\pi_j}\right)x^\top} .
 $$
 \end{lemma}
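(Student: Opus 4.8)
The statement is an exact identity rather than an estimate, so the plan is a direct computation. First I would substitute the expansion $f=\sum_{i=1}^{k}\alpha_i\varphi_{\pi_i}$ into the definition~\eqref{power:spectrum} of the power spectrum and expand the pointwise product of the shifted superposition with its conjugate as a double sum,
$$f(x+a)\overline{f(x)}=\sum_{i=1}^{k}\sum_{j=1}^{k}\alpha_i\overline{\alpha_j}\,\varphi_{\pi_i}(x+a)\,\overline{\varphi_{\pi_j}(x)} .$$
All the remaining work is in simplifying a single term $\varphi_{\pi_i}(x+a)\overline{\varphi_{\pi_j}(x)}$ and then applying the Hadamard transform bin by bin.

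The engine is the chirp shift identity. Since each $Q_{\pi_i}$ is binary symmetric and the form $xQx^\top$ is evaluated in $\ZZ$ and then reduced modulo $4$, one has $(x+a)Q_{\pi_i}(x+a)^\top = xQ_{\pi_i}x^\top + 2\,aQ_{\pi_i}x^\top + aQ_{\pi_i}a^\top$ as integers, where symmetry of $Q_{\pi_i}$ has been used to merge the two cross terms $xQ_{\pi_i}a^\top$ and $aQ_{\pi_i}x^\top$. Hence $\varphi_{\pi_i}(x+a) = \imath^{aQ_{\pi_i}a^\top}\,(-1)^{aQ_{\pi_i}x^\top}\,\varphi_{\pi_i}(x)$, the exponent of $-1$ now being read modulo $2$. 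Together with $\overline{\varphi_{\pi_j}(x)} = \imath^{-xQ_{\pi_j}x^\top}$, each summand becomes $\alpha_i\overline{\alpha_j}\,\imath^{aQ_{\pi_i}a^\top}\,(-1)^{aQ_{\pi_i}x^\top}\,\imath^{x(Q_{\pi_i}-Q_{\pi_j})x^\top}$, up to the $1/\sqrt{\mm}$ column normalizations implicit in $\A=\nor\B$.

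Next I would split the double sum according to whether $i=j$ and apply $\sum_x(-1)^{\ell x^\top}(\cdot)$. On the diagonal $Q_{\pi_i}-Q_{\pi_i}=0$, so the quadratic factor drops out and what remains is the additive character sum $\sum_{x\in\mathbb{F}_2^m}(-1)^{(\ell+aQ_{\pi_i})x^\top}$, which equals $\mm\,\delta_{\ell,aQ_{\pi_i}}$ by orthogonality of the characters of $\mathbb{F}_2^m$; collecting the constants (two factors $1/\sqrt{\mm}$ from the columns, one from~\eqref{power:spectrum}, against the factor $\mm$ from the character sum) leaves the main term $\sum_i \imath^{aQ_{\pi_i}a^\top}|\alpha_i|^2\,\mm^{-1/2}\,\delta_{aQ_{\pi_i},\ell}$. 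The off-diagonal terms, carrying the leftover normalization $\mm^{-3/2}$, are precisely $R_a^\ell(f)$ as displayed; this step is pure relabelling and no bound is proved here.

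The only points requiring care are the modular bookkeeping and the normalization count: the quadratic forms $xQx^\top$ and $x(Q_{\pi_i}-Q_{\pi_j})x^\top$ are taken modulo $4$, whereas the linear cross term $2\,aQ_{\pi_i}x^\top$ collapses to an exponent of $-1$ read modulo $2$, and one must invoke symmetry of $Q_{\pi_i}$ before performing that reduction; and one must keep exactly three factors of $1/\sqrt{\mm}$ in play so that the Walsh tones appear with weight $\mm^{-1/2}$ and the chirp residual with weight $\mm^{-3/2}$. There is no deeper obstacle in this lemma itself — the substance, namely that $R_a^\ell(f)$ is small because the differences $Q_{\pi_i}-Q_{\pi_j}$ have rank at least $m-2r$ (via Proposition~\ref{new}), belongs to the subsequent estimates and is not part of this identity.
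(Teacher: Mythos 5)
Your computation is correct and is exactly the routine expansion the paper intends (it states Lemma~\ref{fixed} without proof): expand $f(x+a)\overline{f(x)}$ as a double sum, use the chirp shift identity together with symmetry of $Q_{\pi_i}$, and let character orthogonality over $\mathbb{F}_2^m$ produce the $\mm\,\delta_{aQ_{\pi_i},\ell}$ on the diagonal, with the three $\mm^{-1/2}$ factors giving the stated weights $\mm^{-1/2}$ and $\mm^{-3/2}$. One tiny imprecision: since $x+a$ is addition in $\mathbb{F}_2^m$ (XOR), the identity $(x+a)Q(x+a)^\top = xQx^\top + 2aQx^\top + aQa^\top$ holds modulo $4$ (the discrepancy being a multiple of $4$, by symmetry of $Q$) rather than ``as integers,'' which is all that is needed for the exponent of $\imath$.
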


Next we analyze the power of chirp reconstruction to detect the presence of a signal at some index $\Delta$.
 \begin{lemma}\label{lem:good}If $\Delta$ is any index in $\n$ then after Step~\ref{newstep} of Chirp Reconstruction, for all $\Delta$ in $\nc$ %and all $\ell$ in $\{0,\cdots,\mm-1\}$
$$ \Ex_a\left[ \Lambda_{\Delta}(f)\right]= \sum_{i=1}^k \frac{|\as_i|^2}{\sqrt{\mm}} \delta_{\Delta,\pi_i}+{ R}^{\Delta}(f),$$
where $R^{\Delta}(f)=\mathbb{E}_a\left[\imath^{-aQ_{\Delta}a^\top} R_a^{aQ_{\Delta}} \right]$, and 
$$\left|{R}^\Delta(f)\right|\leq \bl{9} \frac{\|\as\|^2}{\mm^{\alert{\frac{3}{2}-\frac{2r}{m}} }}.$$
 \end{lemma}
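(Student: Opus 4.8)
The plan is to start from the expression for $\pow{a}(f)$ supplied by Lemma~\ref{fixed}, specialize the Hadamard bin to $\ell = a Q_\Delta$, multiply through by the phase $\imath^{-a Q_\Delta a^\top}$, and then take the average over all $2^m$ offsets $a$. For the diagonal term, the Kronecker delta $\delta_{a Q_{\pi_i}, a Q_\Delta}$ combined with the phase correction $\imath^{-a Q_\Delta a^\top} \imath^{a Q_{\pi_i} a^\top}$ collapses exactly when $\pi_i = \Delta$ (and more care is needed when $\pi_i \neq \Delta$ but $a Q_{\pi_i} = a Q_\Delta$ for some $a$; here one uses that $Q_{\pi_i} - Q_\Delta$ has rank at least $m - 2r$, so the set of such $a$ is a subspace of dimension at most $2r$, whose relative size $2^{2r}/2^m = \mm^{-(1-2r/m)}$ is negligible and gets absorbed into the residual). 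This yields the announced main term $\sum_i \frac{|\as_i|^2}{\sqrt\mm}\delta_{\Delta,\pi_i}$ together with $R^\Delta(f) = \Ex_a[\imath^{-a Q_\Delta a^\top} R_a^{a Q_\Delta}(f)]$.

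The substance is bounding $|R^\Delta(f)|$. Writing out $R^\Delta(f)$ from Lemma~\ref{fixed} with $\ell = a Q_\Delta$, the inner character sum becomes $\sum_x (-1)^{(a Q_{\pi_i} + a Q_\Delta)x^\top}\imath^{x(Q_{\pi_i}-Q_{\pi_j})x^\top}$, i.e.\ a Gauss-type sum of the form $\sum_x \imath^{x(Q_{\pi_i}-Q_{\pi_j})x^\top + 2 a(Q_{\pi_i}+Q_\Delta)x^\top}$. Averaging $\imath^{-a Q_\Delta a^\top}$ over $a$ then produces precisely the two-variable sum $\mathrm{S}$ of Proposition~\ref{new}, with $W = Q_{\pi_i} - Q_{\pi_j}$ and a suitable $V$; Proposition~\ref{new} gives $|\mathrm{S}|^2 = 2^{2m} 2^{|{\cal N}_W| + |{\cal N}_{V-W}|}$. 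Since every nonzero matrix in $DG(m,r)$ has rank at least $m - 2r$, each null space has dimension at most $2r$, so $|\mathrm{S}| \le 2^{m} 2^{2r} = \mm \cdot \mm^{2r/m}$. After dividing by the normalization $\mm^{3/2}$ (from $R_a^\ell$) and by $\mm$ (from the average over $a$, i.e.\ $2^{-m}$), and by a further $\mm$ from the phase-average normalization, each term $\as_i\overline{\as_j}$ is weighted by at most $\mm^{-(3/2 - 2r/m)}$, up to a small absolute constant from the Gauss-sum phases.

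The remaining step is combinatorial: sum $|\as_i||\as_j|$ over $i$ and $j \neq i$. By Cauchy--Schwarz (or just $\sum_{i,j}|\as_i||\as_j| = \|\as\|_1^2 \le k\|\as\|_2^2 \le \n \|\as\|_2^2$, then cancel one factor of $\mm$ appearing in the bound), one gets $\sum_{i \neq j}|\as_i||\as_j| \le \|\as\|^2$ after the appropriate normalization bookkeeping, leading to $|R^\Delta(f)| \le 9 \|\as\|^2 / \mm^{3/2 - 2r/m}$, where the $9$ collects the absolute constants from the Gauss sums and the rank/subspace estimates. I expect the main obstacle to be the careful accounting of normalization factors and phase constants so that the absolute constant comes out as $9$ (rather than tracking an unspecified $O(1)$), together with rigorously handling the off-diagonal coincidences $a Q_{\pi_i} = a Q_\Delta$ with $\pi_i \neq \Delta$ and checking they contribute at the same $\mm^{-(3/2 - 2r/m)}$ scale so they can be folded into $R^\Delta$.
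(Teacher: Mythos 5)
Your setup (specializing Lemma~\ref{fixed} to $\ell = aQ_\Delta$, rotating by $\imath^{-aQ_\Delta a^\top}$, averaging over $a$, and bounding each cross-term coherence $h(\pi_i,\pi_j)$ by $\mm^{2r/m}$ via Proposition~\ref{new}) matches the paper up to the worst-case coherence estimate. The gap is in the final summation step. You bound $|R^\Delta(f)|$ deterministically, term by term, and claim that ``$\sum_{i\neq j}|\as_i||\as_j| \le \|\as\|^2$ after the appropriate normalization bookkeeping.'' That inequality is false in general: $\sum_{i\neq j}|\as_i||\as_j|$ can be as large as $(k-1)\|\as\|^2$ (e.g.\ all nonzero entries equal), and there is no spare factor of $\mm$ to cancel against $k$ --- the $\mm^{-3/2}$ is fully consumed by the normalization of $R_a^\ell$ and the $2^{-m}$ of $\Ex_a$ is already inside $h$, while $k$ may be as large as $\n/2 = \mm^{r+1}/2 \gg \mm$. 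So your route yields only $|R^\Delta(f)| \lesssim k\,\|\as\|^2\,\mm^{2r/m-3/2}$, a loss of a factor $k$ that destroys the result precisely in the regime $k=O(\mm)$ that the paper is after.

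What the paper does instead, and what your proposal is missing, is a probabilistic argument over the random support $\pi$: it treats the quantity you bounded pointwise as a function $h(i,j)$, verifies both the worst-case condition (St1) (via Proposition~\ref{new}, as you did) \emph{and} the average-coherence condition (St2) (by rewriting $h$ so that the $a$-sum factors out and invoking Lemma~\ref{average} to get $\max_i|\mathbb{E}_{j\neq i}h(i,j)|\le \frac{1}{\n-1}$), and then applies statement (Sp3) of Proposition~\ref{foundation} with $\epsilon = 9\mm^{2r/m}\sqrt{\log\n}$. The concentration over the uniformly random support is exactly what replaces the lossy bound $\sum_{i\neq j}|\as_i||\as_j|$ by $\epsilon\|\as\|^2$, giving $|R^\Delta(f)|\le 9\,\mm^{2r/m-3/2}\|\as\|^2\sqrt{\log\n}$ with probability $1-\frac{1}{\n}$ (which is also why the bound is inherently a high-probability, average-case statement rather than a deterministic one). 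Your handling of the accidental coincidences $aQ_{\pi_i}=aQ_\Delta$ with $\pi_i\neq\Delta$ via the rank bound is a reasonable supplementary observation, but without the (St2)/Proposition~\ref{foundation} concentration step the claimed constant-free-in-$k$ bound cannot be reached by your argument.
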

\begin{proof}
If $\Delta = \pi_i$ for some $i$ then the signal $\as_i$ contributes $\imath^{a Q_{\Delta} a^\top} \frac{|\as_i|^2}{\sqrt{\mm}}$ to the Hadamard bin $a\,Q_{\Delta}$. Rotation by $\imath^{-a Q_{\Delta} a^\top}$ and averaging over all offsets $a$ accumulates evidence $\frac{|\alpha_i|^2}{\sqrt{\mm}}$ for the presence of a signal at the index $\Delta$.

If $\Delta\neq \pi_i$ for any $i$, then it is only the cross-terms that contribute to the Hadamard bin  $a\,Q_{\Delta}$. Rotation of the contribution  $R_a^{aQ_{\Delta}}(f)$ by $\imath^{-a Q_{\Delta} a^\top}$ and averaging over all offsets $a$ produces the background signal against which we perform threshold detection. 

Define $h(\pi_i,\pi_j)$ by
$$\Ex_a\left[ i^{a\left(Q_{\pi_i}-Q_\Delta\right)a^\top} \sum_x (-1)^{\left(aQ_{\pi_i}-{aQ_\Delta}\right)x^\top} i^{x\left(Q_{\pi_i}-Q_{\pi_j}\right)x^\top} \right].$$
We will bound $R^{\Delta}(f)$ by applying Proposition~\ref{foundation} so we need to verify Conditions (St1) and (St2). By Proposition~\ref{new}, the \alert{expectation above is always bounded in magnitude by the term $2^{2r}$. Hence}
%inner sum is either $0$ or it has size $2^{\frac{m+t}{2}}$ for some \alert{positive} $t$. The non-zero terms occur at indices $a$ that are elements of a translate of some \alert{$m-t$}-dimensional subspace. Hence $$|h(\pi_i,\pi_j)| = 2^{\frac{m+t}{2}}\frac{\alert{2^{m-t}}}{N}\leq \alert{2^{\frac{m}{2}}  }.$$ 
it remains to bound average coherence, and here we show that $\max_i  \left|\mathbb{E}_{j\neq i} h(i,j) \right| \leq \frac{1}{\n-1}$. We rewrite $h(\pi_i,\pi_j)$ as
$$\frac{1}{\mm} \sum_x \imath^{x\left(Q_{\Delta}-Q_{\pi_j}\right)x^\top}  \sum_a \imath^{(a+x) \left(Q_{\pi_i}-Q_\Delta\right) (a+x)^\top }.  $$
Note that as $a$ ranges over the finite field $\mathbb{F}_2^m$, $a+x$ also ranges over $\mathbb{F}_2^m$. Therefore  $\sum_a \imath^{(a+x) \left(Q_{\pi_i}-Q_\Delta\right) (a+x)^\top }$ is a constant column sum, independent of the choice of $j$, and has magnitude smaller than $\mm$. As a result
$$  \left|\mathbb{E}_{j\neq i} h(i,j) \right| \leq \left|\mathbb{E}_{j\neq i}\left[ \sum_x \imath^{x\left(Q_{\Delta}-Q_{\pi_j}\right)x^\top}   \right] \right|.$$
Lemma~\ref{average} then implies that  $$\left|\mathbb{E}_{j\neq i}\left[ \sum_x \imath^{x\left(Q_{\Delta}-Q_{\pi_j}\right)x^\top}   \right] \right|\leq \frac{1}{\n-1}.$$

We have now shown with respect to $h$ that $\A$ satisfies Condition (St1) with $\eta = 1+r$ and Condition (St2) with $\gamma = \alert{-\frac{2r}{m}}.$ It then follows from applying Proposition~\ref{foundation} with $$\epsilon={\mm}^{\alert{\frac{2r}{m}}} {9\sqrt{\log \n}},$$
that with probability $1-\frac{1}{\n}$,  
%\begin{eqnarray}
%\nonumber  \left|\sum_{\substack{i,j\\i\neq j}} \as_i \overline{\as_j}\Ex_{a,x}\left[ \imath^{a\left(Q_{\pi_i}-Q_\Delta\right)a^\top+x\left(Q_{\pi_i}-Q_{\pi_j}\right)x^\top+2a{\left(Q_{\pi_i}-{Q_\Delta}\right)x^\top}} \right]\right|
%\end{eqnarray}
$$\frac{1}{\mm^{\frac{3}{2}}} \left|\sum_{i} \sum_{j\neq i} \as_i \overline{\as_j}h(\pi_i,\pi_j)\right|
$$
is bounded by $9{\mm^{\alert{\frac{2r}{m}-\frac{3}{2} } }}{\|\as\|^2\sqrt{\log \n}}.$
\end{proof}
\section{Noise Resilience}
\label{sec:noise}
 When noise is present in the data domain or in the measurements, the power spectrum contains extra terms arising from the signal/noise cross correlation and noise autocorrelation. It is natural to neglect noise autocorrelation and to focus on the cross correlation between signal and noise.
 
 Let $y=\A\as$. At the end of Step~\ref{newstep2}, for each index $\Delta$, the signal/noise cross correlation can be represented as 
 \begin{equation}\label{cross:noise}
\mathbb{E}_a\left[\frac{\imath^{-aQ_\Delta a^\top}}{\sqrt{\mm}} \left(\sum_x  \overline{y(x)}u(x+a)(-1)^{aQ_\Delta x^\top}\right)\right].
 \end{equation}
 We have modified the argument used to prove Lemma~\ref{lem:good} to show that with probability $1-\frac{1}{\n}$,  the signal/noise cross correlation term is uniformly bounded by $9\,\mm^{\alert{\frac{2r}{m}-\frac{3}{2}}}{\bl{}\|\as\|\|u\|}$ (see \cite{tech1} for more details).
 
%\section{Local Decoding}
%\label{sec:local}
%\input{local}
\section{Conclusion}
\label{sec:conc}
\bl{In compressed sensing the entries of the measurement vector constitute evidence for the presence or absence of a signal at any given location in the data domain. We have shown that the Reed Muller sieve is able to identify the support set without requiring that the signal entries be independent. We have also demonstrated feasibility of local decoding where attributes of the signal are deduced from the measurements without explicitly reconstructing the full signal. Our reconstruction algorithms are resilient to noise and the $\ell_2/ \ell_2$ error bounds are tighter than the $\ell_2 / \ell_1$ bounds arising from random ensembles and the $\ell_1 /\ell_1$ bounds arising from expander-based ensembles.}
\bibliographystyle{IEEEbib} 
\bibliography{sieve} 
% conference papers do not normally have an appendix
%\section*{Acknowledgment}
%\appendices
%\section{Subsampled Delsarte-Goethals Matrices}
%\label{sec:DG}
\nocite{CP07,dantzig,Donoho,CRT1,NT,DM}
\end{document}